\def\bSigma{\mbox{\boldmath$\Sigma$}}
\def\bmu{\mbox{\boldmath$\mu$}}
\def\bPhi{\mbox{\boldmath$\Phi$}}
\def\bPsi{\mbox{\boldmath$\Psi$}}
\def\bomega{\mbox{\boldmath$\omega$}}
\def\bTheta{\mbox{\boldmath$\Theta$}}
\def\bgamma{\mbox{\boldmath$\gamma$}}
\def\bGamma{\mbox{\boldmath$\Gamma$}}
\def\bepsilon{\mbox{\boldmath$\epsilon$}}
\def\x{\mbox{\boldmath$x$}}
\def\e{\mbox{\boldmath$e$}}
\def\X{\mbox{\boldmath$X$}}
\def\y{\mbox{\boldmath$y$}}
\def\Y{\mbox{\boldmath$Y$}}
\def\R{\mbox{\boldmath$R$}}
\def\O{\mbox{\boldmath$O$}}
\def\V{\mbox{\boldmath$V$}}
\def\U{\mbox{\boldmath$U$}}
\def\S{\mbox{\boldmath$S$}}
\def\D{\mbox{\boldmath$D$}}
\def\Q{\mbox{\boldmath$Q$}}
\def\P{\mbox{\boldmath$P$}}
\def\I{\mbox{\boldmath$I$}}
\def\A{\mbox{\boldmath$A$}}
\def\B{\mbox{\boldmath$B$}}
\def\K{\mbox{\boldmath$K$}}
\def\H{\mbox{\boldmath$H$}}
\def\G{\mbox{\boldmath$G$}}
\def\m{\mbox{\boldmath$m$}}
\def\M{\mbox{\boldmath$M$}}
\def\V{\mbox{\boldmath$V$}}
\def\w{\mbox{\boldmath$w$}}
\def\0{\mbox{\boldmath$0$}}
\def\1{\mbox{\boldmath$1$}}
\def\m{\mbox{\boldmath$m$}}
\def\w{\mbox{\boldmath$w$}}
\def\cM{\mathcal{M}}
\def\cZ{\mathcal{Z}}
\def\cS{\mathcal{S}}
\def\cE{\mathcal{E}}
\def\cV{\mathcal{V}}
\def\cG{\mathcal{G}}
\def\cv{\mathcal{U}}
\def\cN{\mathcal{N}}
\def\eqno#1{equation~(\ref{eq:#1})}
\newtheorem{thm}{Theorem}
\begin{document}

\title{Models of Random Sparse Eigenmatrices \\ \& \\  Bayesian Analysis of Multivariate Structure}
 
\author{Andrew Cron\footnote{$84.51^\circ$ and Duke University.
				 \href{mailto:andrew.j.cron@gmail.com}{andrew.j.cron@gmail.com}}\,\,  \& 
Mike West\footnote{Duke University. \href{mailto:mw@stat.duke.edu}{mw@stat.duke.edu}}
	       \footnote{Research partly supported by grants from the
National Science Foundation [DMS-1106516] and the National 
\newline\indent \quad Institutes of Health [1RC1-AI086032].  Any opinions, findings and conclusions or recommendations expressed \newline\indent \quad in
this work are those of the authors and do not necessarily reflect the views of the NSF or NIH.}
}

\maketitle

\begin{abstract}
We discuss probabilistic models of random covariance structures defined
by  distributions over sparse eigenmatrices.
The decomposition of orthogonal matrices in terms of Givens rotations
defines a natural, interpretable framework for defining distributions on
sparsity structure of random eigenmatrices.   We explore theoretical
aspects and implications for conditional independence structures arising
in multivariate Gaussian models, and
discuss connections with sparse PCA, factor analysis and Gaussian graphical models.
Methodology includes model-based exploratory
data analysis and Bayesian analysis via reversible
jump Markov chain Monte Carlo.  A simulation study examines the ability to
identify sparse multivariate structures compared to the  benchmark graphical modelling approach.
Extensions to multivariate normal mixture models with additional measurement errors move into the
framework of latent structure analysis of broad practical interest.  We explore the
implications and utility of the new models with summaries of a detailed
applied study of a $20-$dimensional breast cancer genomics data set.

\medskip
\noindent {\em Key Words and Phrases:} Bayesian sparsity models; Givens rotations; 
Graphical models; Mixtures of sparse factor analyzers; Mixtures of graphical models;
Random orthogonal matrix; Sparse eigenmatrices; Sparse variance matrix; Sparse factor analysis; Sparse precision matrix

\end{abstract}

\newpage

\section{Introduction}
We are interested in Bayesian modelling approaches to sparsity in
variance and precision matrices in multivariate normal
distributions. With interests in parsimony and scalability of analyses
of multivariate data in models such as Gaussian mixtures for
classification, priors that encourage sparse component covariance
patterns are increasingly key as dimension increases.  New modelling
frameworks also need to enable efficient computational methods for
model fitting, which can otherwise be a barrier to application.

Among recent related developments, traditional sparsity priors from
model selection in regression have been exploited in sparse extensions
of Bayesian factor
analysis~\cite{West2003,Carvalho2008,Yoshida2010}, and in
complementary approaches using Gaussian graphical
models~\cite{Jones2005,Dobra2011,Rodriguez2011}.  The developments in the
current work  represent natural extensions of the thinking behind these models--
building sparsity into variance or precision matrices-- while
naturally linking and bridging between factor models and graphical
models.

The new \lq\lq sparse Givens'' models introduced and developed here
arise from new theory of random sparse eigenmatrices; these define
eigenstructure of  variance and precision matrices, and so induce
new classes of priors over Gaussian graphical models. Compared to factor analysis, we avoid
the  assumption of a reduced dimensional latent factor
structure, and the choices it involves.  Our new models arise from an
inherent theoretical feature of eigenmatrices, rather than
hypothesized model structures.  We also face fewer challenges in
hyper-parameter specification and tuning to fit models. Our models can
in fact be viewed as full-rank factor models with sparse, square
factor loadings matrices.  Additional related work has explored new
classes of priors over variance matrices through varying
parametrizations, such as partial correlations or Cholesky
decompositions~\cite{Daniels2009,Lopes2010c}, that could be
extended with sparsity priors. Some such extensions to time series 
contexts~\cite{Nakajima2013,GruberWest2015BA} 
show the utility of various Cholesky-style approaches. 
Our approach relates to this general literature in that
it uses an inherent theoretical property of eigenmatrices that
naturally defines the reparametrization as well as an underlying set
of parameters that, when set to zero, define parsimonious models.

Section~\ref{sec:eigtheory} introduces the theoretical and modelling
ideas; the approach is based on the Givens rotation representation of
full-rank eigenmatrices~\cite{Anderson1987}. We describe how this can
be exploited to define new classes of random sparse eigenmatrices, and
relate these to decomposable graphical models.
Section~\ref{sec:priors} considers prior specification over variance
matrices using this new parametrization, in the context of normal
random samples.  Section~\ref{sec:explore} discusses properties of the
likelihood and aspects of exploratory data analysis that give insights
into sparsity structure of eigenmatrices in our framework, with an
example using a $20-$dimensional gene expression data set.
Section~\ref{sec:mcmc} discusses full Bayesian model fitting using a
customized reversible jump Markov Chain Monte Carlo approach.
We make a detailed, simulation-based comparison
with traditional Gaussian graphical modelling (GGM) in~\ref{sec:simstudy}.
Section~\ref{sec:mixtures} discusses embedding the basic model into
more practicable contexts involving measurement errors and normal
mixture models. That section concludes with a detailed example using
breast cancer gene expression data, where underlying components relate
to known, broad and intersecting cancer subtypes with expected
sparsity in dependence, and conditional dependence patterns of subsets
of the genes.  Section~\ref{sec:end} concludes with additional
comments and potential extensions.

\section{Structure and Sparsity in Eigenmatrices
\label{sec:eigtheory}}
We discuss Givens representations of variance matrices, introduce the general
idea of sparsity modelling in this context, and explore aspects of the theoretical
structures that emerge under priors over the resulting models.

\subsection{Givens Rotator Product Representation}
Consider a random $q-$vector $\x$ with variance matrix $\V=Var(\x).$ The
spectral representation\index{spectral representation} 
(principal component decomposition) is
$\V=\R\D\R'$ where $\R$ is the $q\times q$ orthogonal matrix of
eigenvectors-- the eigenmatrix-- and
$\D=\textrm{diag}(d_1,\ldots,d_q)$ is the matrix of non-negative
eigenvalues. The corresponding precision matrix is $\K=\R\A\R'$ with
$\A=\D^{-1} = \textrm{diag}(a_1,\ldots,a_q).$ The general 
Givens rotator product\index{Givens rotator product} 
representation of $\R$~\cite{Anderson1987,Yang1994}
is
\setlength\arraycolsep{1.2pt}
\begin{equation} \label{eq:decomp}
\begin{array}{lrl}
\R = \O_{1,2}(\omega_{1,2})&\O_{1,3}(\omega_{1,3})\ldots \O_{1,q}(\omega_{1,q})& \times \, \\
&\O_{2,3}(\omega_{2,3})\ldots \O_{2,q}(\omega_{2,q})& \times \, \\
&\vdots\ \ \ \ \ \ \ & \\
&\O_{q-1,q}(\omega_{q-1,q})&\times\ \Q
\end{array}
\end{equation}
where $\Q$ is diagonal with elements $\pm 1,$ and each
$\O_{i,j}(\omega_{i,j})$ is a 
Givens rotation matrix\index{Givens rotation matrix}
\begin{equation}\label{eq:givens} \O_{i,j}(\omega_{i,j})\ = \
\bordermatrix{ 		&	    &  i  &    &  j   &   \cr
                             &       I & 0                 & 0 & 0 & 0 \cr
		      i     	& 0 & \phantom{-}\cos(\omega_{i,j}) & 0 & \sin(\omega_{i,j}) & 0 \cr
				&	 0 & 0 & I & 0 & 0 \cr
		      j		&	 0&  -\sin(\omega_{i,j}) & 0 & \cos(\omega_{i,j}) & 0  \cr
				&	 0 & 0 & 0 & 0 & I
}
\end{equation}
for some rotator angles $\omega_{i,j}$, ($i=1:q, \, j=i+1:q).$
Some comments and notation follow.
\begin{itemize}

\item  The angles $\omega_{i,j} $ lie in $ (-\pi/2, \pi/2].$   Write $\bomega$ for the set of these
$m=q(q-1)/2$ angles.

\item This decomposition of $\R$ into $m$ angles is unique and linked to the specific order of the
variables in $\x.$

\item For our goal of covariance modelling, note that $\Q$ cancels in
$\R\D\R'$; hence, $\Q$ plays no role and we set $\Q=\I_q$ with no loss when focused on modelling
variance matrices via this decomposition.

\item Covariance patterns in $\V$ can be viewed as successively
  built-up by pairwise rotations of initial uncorrelated random
  variables.  Take a $q-$vector $\e$ with $Var(\e)=\D; $ then
  dependencies are defined by successive left multiplication of $\e$
  by the rotator matrices: first by $ \O_{q-1,q}(\omega_{i,j}),$ then
  $ \O_{q-2,q}(\omega_{i,j}), $ and so on up to $
  \O_{1,2}(\omega_{i,j})$ to define $\x=\R\e$ (assuming $\Q=\I_q$ as
  noted).

\item If $\omega_{i,j}=0$ for any $(i,j),$ then $\O_{i,j}(0)=\I_q$ and
  that rotation has no contribution to the build-up of dependencies
  and is effectively removed from the representation of \eqno{decomp}.

\item If $\omega_{i,j}=\pi/2$ for any $(i,j),$ then
$\O_{i,j}(\pi/2)\M\O_{i,j}'(\pi/2)$  permutes rows $i$ and $j$, and columns $i$ and $j$,
of any square matrix $\M$ and hence does not affect the sparsity of $\M.$

\item The spectral representations of $\V$ and $\K$ are unique only up
  to permutations of the columns  $\R,$ i.e., reordering of the
  eigenvalues.  Any reordering of the eigenvalues will generate a
  decomposition as in \eqno{decomp} but with different values of the
  rotator angles. For identification, therefore, we will constrain to
  $d_1>d_2>\cdots d_q.$ For variance matrices in models of data
  distributions, the $d_i$ will be distinct so a strict ordering can
  be assumed.

\end{itemize}

 \subsection{Sparse Givens Models}

The general representation above reparametrizes $\V$ to the
$m=q(q-1)/2$ angles in $\bomega$ and the $q$ eigenvalues in $\D$. We note above
the role of  zero angles, and this opens the path to defining 
sparse Givens models\index{sparse Givens model}, i.e., 
products of fewer than the full set of rotators defining  a resulting 
sparse eigenmatrix\index{sparse eigenmatrix}:  
if a large number of the angles are zero, then $\R$ will become sparse.  This can
induce a sparse  variance matrix\index{sparse  variance matrix} $\V$ and, equivalently, a 
sparse precision matrix\index{sparse precision matrix} $\K$ as a result.

Let $\cM = \{ (i,j); \ i=1:(q-1), j=(i+1):q \}$ with $|\cM|=q(q-1)/2:=m$. Then
\eqno{decomp} is compactly written as
$\R = \prod_{k=1}^m \O_{m_k}(\omega_{m_k})$
where $m_k$ is the $k^\textrm{th}$ pair of dimensions in $\cM$.
Now allow exact zeros in $\bomega.$
Define a sparsity defining index sequence $\cS=\{(i,j)\in
\cM : \omega_{i,j} = 0 \}$ with cardinality $|\cS|=s$, and set
$\cZ=\cM \setminus \cS$ with size $z=m-s$.  In words, $\cZ$ is a sequence of
$z\leq m$ ordered pairs $(i_k,j_k)$  denoting the relevant, non-identity Givens rotation matrices in
\eqno{decomp} and
\begin{equation} \label{eq:decompsparse}
\R=\prod_{k=1}^z \O_{Z_k}(\omega_{i_k,j_k}).
\end{equation}
Assuming that priors support exact zeros in $\bomega,$ a primary modelling goal is
then to learn $\cZ$ and
the corresponding non-zero angles.

Among the features of this approach 
is that we are able to model full-rank,
orthogonal matrices with a parsimonious set of angles, and we maintain
the computational convenience of the full-rank spectral
parametrization   when inverting $\V$ and $\K.$ This is especially useful
in evaluating density functions in Metropolis Hastings acceptance ratios and, later, in
computing normal mixture classification probabilities.

\subsection{Conditional Independence Graphs}

The process of successively building dependencies by adding rotators
(from right to left) in \eqno{decompsparse} induces ties between the
variables whose variance matrix is the resulting $\V.$ The resulting
structure of $\K=\V^{-1}$ connects to Gaussian graphical\index{Gaussian graphical model} modelling~\cite{lauritzen96,Dobra2004,Jones2005,carvalho:west:07,Dobra2011,Rodriguez2011}.
\index{conditional independence graph}

View  the $q$ variables in $\x$ as
nodes of a  graph in which conditional independencies are represented by lack of edges between
node pairs. Specifically, this is the undirected graph $\cG=(\cV,\cE)$ with the $q$
nodes, or vertices, in the vertex set $\cV=\{ 1:q \}$; two vertices $(i,j)$ are connected
by an edge in the graph if, and only if,
$K_{i,j}\neq0$ where $K_{i,j}$ is the $(i,j)-$element of $\K.$ The edge set is
$\cE = \{ (i,j): \, K_{i,j}\ne0 \}. $

Any precision matrix $\K_0$ having some off-diagonal zero elements has an
implied graph $\cG_0.$ Now take
$\K_1 = \O_{i,j}\K_0\O_{i,j}'$ where $\omega_{i,j}\neq0$ and
$\omega_{i,j}\neq\pi/2,$ with implied graph $\cG_1$.
Notice that left multiplication of $\R_0$ by $\O_{i,j}$ simply replaces the
  $i^{th}$ and $j^{th}$ rows of  with a linear combination of the
  two. Therefore, the indices of the non-zero elements of the $i^{th}$
  and $j^{th}$ rows of $\R_1$ are the union of the indices
  of the $i^{th}$ and $j^{th}$ rows of $\R_0$.     Similar comments apply to right multiplication. As
a result,  the sparsity pattern of $\K_1$ is the same as that of $\K_0$ except in rows and
columns $i$ and $j$. Specifically, those rows and columns have
sparsity indices that are the unions of the those in $\K_0.$
This shows that  the additional rotator $\O_{i,j}$ maps the
graph $\cG_0 = (\cV,\cE_0)$  to  $\cG_1 = (\cV,\cE_1)$ as follows.
With  $\cN_j(i)=\{(j,k):\, (i,k)\in \cE_0\}$,  then $\cE_1=\cE_0\cup(i,j)\cup\cN_i(j)\cup
\cN_j(i)$. In words, $\cG_1$ takes $\cG_0$, connects $i$ and
$j$, and unions their neighborhoods.

This structure also generates  constructive insights into the nature of the graphical models
so defined.   It shows that adding a new rotator to an existing sparse Givens model merges the
complete subgraphs  (cliques) in which the rotators pair $(i,j)$ reside into one larger clique.
Starting at an empty graph, this leads to graphs that are decomposable, formally shown as follows.

\begin{thm} \label{th:decomp} The conditional independence graph $\cG$
  implied by a sparse $\K=\R\A\R'$ under  $\cZ$ is a decomposable  graph.\index{decomposable graphical model}
\end{thm}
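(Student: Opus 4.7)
My plan is induction on $z=|\cZ|$, showing that each intermediate graph $\cG_k$ built from the first $k$ rotators acting on the diagonal $\A$ is chordal and hence decomposable. The base case $\cG_0$ has no edges and is trivially decomposable. For the inductive step, suppose $\cG_{k-1}$ is chordal and let $\O_{i,j}$ be the next rotator. The preceding discussion has already established that $\cG_k$ arises from $\cG_{k-1}$ by connecting $i$ to $j$ and unioning their neighbourhoods; the key observation for what follows is that $i$ and $j$ become true twins in $\cG_k$, i.e., their closed neighbourhoods coincide.

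To prove $\cG_k$ chordal, I would assume for contradiction that it contains a chordless induced cycle $C$ with $|C|\ge 4$ and case-split on how $C$ meets $\{i,j\}$. If both $i,j\in C$, then either $(i,j)$ is itself a chord of $C$, or a cycle-neighbour of $j$ in $C$ is (by twinship) also adjacent to $i$ and gives a chord. If only $i\in C$ with cycle-neighbours $a$ and $b$, I would construct a companion cycle in $\cG_{k-1}$ by rerouting through $j$, namely $j - a - u_2 - \cdots - u_{k-2} - b - j$ (when $(j,b)\in\cE_{k-1}$) or $j - a - u_2 - \cdots - b - i - j$ (when $(i,j)\in\cE_{k-1}$). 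Chordality of $\cG_{k-1}$ then delivers a chord of this companion, which the Givens neighbourhood rules translate back to a chord of $C$, the desired contradiction.

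The principal obstacle is the boundary subcase in which the only companion available in $\cG_{k-1}$ is an induced \emph{path} $j - a - u_2 - \cdots - b - i$ rather than a cycle; this occurs precisely when $(i,j),(j,b)\notin\cE_{k-1}$ and exactly one of $(i,a),(i,b)$ is a new edge. Plain chordality of $\cG_{k-1}$ is not enough here. To close this subcase I would strengthen the inductive hypothesis beyond chordality, invoking the clique-merging picture given just before the theorem: each rotator merges the cliques currently containing its two endpoints, so the sparse Givens procedure cannot produce the ``disconnected middle'' $u_2,\ldots,u_{k-2}$---each required to be non-adjacent to both $i$ and $j$---that the bad induced path would demand. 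A rotator-by-rotator tracking argument should show that every vertex entering $a$'s or $b$'s neighbourhood during the construction is simultaneously placed in the neighbourhood of $i$ or $j$, ruling out this configuration and closing the induction.
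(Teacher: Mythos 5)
Your route is genuinely different from the paper's (the paper argues by exhibiting a perfect elimination ordering, moving the rotated pair to the end of the previous ordering; you argue by excluding chordless cycles), and the cases you do close are handled correctly -- in particular the ``both $i,j\in C$'' case via true twinship, and the companion-cycle cases where $(j,b)$ or $(i,j)$ is already an edge of $\cG_{k-1}$. But the proposal is incomplete at exactly the subcase you flag, and that flag is not a formality: it is the mathematical heart of the theorem. The operation ``make $i,j$ adjacent true twins with common neighbourhood $N(i)\cup N(j)$'' does \emph{not} preserve chordality in general. If $\cG_{k-1}$ contains an induced path $j - a - u - b - i$ (which is itself a chordal configuration), the rotator on $(i,j)$ produces the chordless four-cycle $i - a - u - b - i$, since $a$ acquires $i$ as a neighbour while $u$ acquires neither $i$ nor $j$. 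So no argument whose only inductive input is ``$\cG_{k-1}$ is chordal'' can succeed; one must prove that the sparse Givens construction never reaches such a configuration, and that is precisely what remains unproved in your write-up.

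The patch you sketch does not yet do this. The claim that ``every vertex entering $a$'s or $b$'s neighbourhood during the construction is simultaneously placed in the neighbourhood of $i$ or $j$'' is false as stated: $N(a)$ grows under rotators that never touch $i$ or $j$, and one can build sparse Givens graphs containing induced paths on four and more vertices, so neighbourhood growth is not globally synchronised in the way you hope. The clique-merging picture you invoke is also weaker than you need: a rotator on $(i,j)$ makes $i$ and $j$ adjacent to the union of the two old neighbourhoods, but it does \emph{not} make that union complete (e.g.\ two disjoint edges $\{i,u\}$, $\{j,w\}$ become $K_4$ minus the edge $\{u,w\}$), so ``the two cliques merge into one larger clique'' cannot be taken as the strengthened invariant. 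Any complete proof has to identify and propagate a finer structural invariant of the graphs reachable by the ordered rotator product in \eqno{decompsparse} -- for instance, tracking the supports of the rows of $\R$ and how they nest under the specific ordering of rotators in \eqno{decomp} -- and then show that invariant excludes the bad induced path. For what it is worth, the paper's own inductive step leans on an analogous unjustified assertion (that the neighbours of a neighbour $v^\star$ of $v_i$ become neighbours of $v_i$) at the corresponding moment, so you have located a genuine subtlety rather than missed an easy step; but as it stands your proposal does not establish the theorem.
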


\begin{proof}
  It is enough to show that $\cG$ has a perfect elimination ordering; that is, an ordering of the vertices
  of the graph such that, for each vertex $v\in \cV$, the
  neighbors of $v$ that occur after $v$ in the order form a clique~\cite{Fulkerson1965}. We do this by
   induction, beginning with no rotations: $\R_0=\I_q$ and $\K_0=\A.$
  This implies that $\cG_0$ is the empty graph and the perfect
  elimination ordering is trivial. For the inductive step, assume
  that an ordering exists for the graph implied by a current set of 
 Givens rotations defining $\R_0$ and $\cG_0.$ Now take
$\K_1 = \O_{i,j}\K_0\O_{i,j}'$ where $\omega_{i,j}\neq0$ and
$\omega_{i,j}\neq\pi/2,$ with implied graph $\cG_1$.
  Note that there is no loss of generality here;
  $\omega_{i,j}=\pi/2$ would imply simply swapping the $i$ and $j$ rows
  and columns of $\K_0$ to make $\K_1,$ and so always yields another decomposable graph.  It is
  now enough to show that $\cG_1$ has a perfect elimination
  ordering.

  Start with the ordering of $\cG$ given by
  $\cv=\{ v_1,v_2, \ldots,v_q\} \equiv \{ 1,2,\ldots,q\}. $  Now take the ordering for
  $\cG_1$ to be
  $\tilde{\cv}=\{v_1,\dots,v_{i-1},\dots,v_{i+1},\dots,v_{j-1},v_{j+1},\dots,v_q,v_i,v_j\}$. It
  is enough to show that $\tilde{\cv}$ is a perfect elimination
  ordering for $\cG_1$. Take $v^\star\in\tilde{\cv}$ and
  let $\eta$ be $v^\star$ and its neighbors that occur after $v^\star$
  in $\tilde{\cv}$ . We need to show that $\eta$ forms a clique in
  $\cG_1$.  If $v^\star=v_1,v_i,\textrm{or }v_j$, this is
  trivial. If $v^\star$ is not a neighbor of either $v_i$ or $v_j$ in
  $\cG_0$, the rotation has no effect on the neighborhood of $v^\star$
  and $\eta$ is a clique. Now suppose that $v^\star$ is a neighbor
  of $v_i$ in $\cG_0$. Due to our construction of $\cG_1$,
  the neighbors of $v^\star$ in $\cG_0$  become neighbors of $v_i$ in $\cG_1.$  Since
  $\eta\backslash v_i$ is a clique in $\cG_0$,  then $\eta$ will remain a clique
  in $\cG_1$. Since $v_i$ and $v_j$ were moved to the end of
  the ordering and $\eta\backslash v_i$ comes after $v^\star$ in
  $\tilde{\cv}$ by the inductive hypothesis, then $\tilde{\cv}$ is a
  perfect elimination ordering of $\cG_1$.
\end{proof}

Note that the above concerns general, unrestricted values of the
non-zero angles. Furthermore, this applies to any ordering of the
rotators where \eqno{decomp} is a special case.  There are sparse
precision matrices whose graphs are decomposable but that do not have
a sparse Givens representations for their eigenmatrices.  These arise,
in particular, in parametric models where the variance and precision
matrix are initially defined as functions of lower dimensional
parameters to begin; in such cases, the resulting eigenmatrices are
inherently structured and typically not sparse, even though the
precision matrices are sparse.  The simplest example is that of the
dependence structure for a set of $q$ consecutive values of a
stationary, linear, Gaussian first-order autoregressive process.
There $\K$ is tri-diagonal, and neither $\R$ nor $\V$ is sparse.
While $\R$ has the Givens representation, all $m$ angles are required
and they are deterministically related.

In the next section we define priors for the rotator angles $\bomega.$   This includes
conditional priors for the effective angles-- excluding values of $0$ and $\pi/2$--
under which these angle are a random sample from a continuous distribution.  In such cases,
which can be regarded as all practicable cases for applied data analysis, we find a surprising
connection between sparse graphical models and sparse factor models; that is, they coincide in this
new sparse Givens approach.

\begin{thm} \label{th:covsparse}   If the angles $\omega_{i,j} \in \cZ$ defining
a sparse eigenmatrix are a random sample from a continuous distribution, then
the resulting patterns of zeros in
$\V$ and $\K=\V^{-1}$ are the same with probability one.
\end{thm}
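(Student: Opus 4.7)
The plan is to reduce the claim to a polynomial-identity question and then to apply the fact that a nonzero real-analytic function of continuously-distributed variables vanishes only on a Lebesgue-null set. I would start from the observation that
\[
V_{i,j}(\bomega) \;=\; \sum_{k=1}^q d_k\,\phi_k^{(i,j)}(\bomega), \qquad
K_{i,j}(\bomega) \;=\; \sum_{k=1}^q d_k^{-1}\,\phi_k^{(i,j)}(\bomega),
\]
share the same ``basis'' polynomials $\phi_k^{(i,j)}(\bomega) := R_{i,k}(\bomega)R_{j,k}(\bomega)$ in the trigonometric ring generated by $\{\cos\omega_\alpha,\sin\omega_\alpha\}_{\alpha\in\cZ}$; they differ only in the weights $d_k$ versus $1/d_k$. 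So $V_{i,j}$ is identically zero as a polynomial in $\bomega$ iff $\D$ lies in $N_{i,j}:=\{\e:\sum_k e_k\phi_k^{(i,j)}\equiv 0\}$, and similarly $K_{i,j}$ iff $\A=\D^{-1}\in N_{i,j}$.

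Next I would introduce the structural support $\Sigma_{i,j}:=\{k:\phi_k^{(i,j)}\not\equiv 0\}$. The union-of-supports combinatorics from the proof of Theorem~\ref{th:decomp} implies that $\Sigma_{i,j}$ is determined purely by the index set $\cZ$, independently of the eigenvalues, and is therefore common to both $\V$ and $\K$. The central claim --- which I expect to be the main obstacle --- is that when $\Sigma_{i,j}\neq\emptyset$ the only $\mathbb{R}$-linear relation among $\{\phi_k^{(i,j)}:k\in\Sigma_{i,j}\}$ is the single orthogonality identity $\sum_{k\in\Sigma_{i,j}}\phi_k^{(i,j)}=0$ inherited from $\R\R'=\I$. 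I would establish this by induction on the number of rotators in \eqno{decompsparse}: introducing a fresh angle $\omega$ injects the linearly independent functions $\cos\omega$ and $\sin\omega$ into the polynomial ring, so a would-be new linear relation among the updated $\phi_k$'s would force the pre-existing coefficient polynomials to vanish, contradicting the inductive hypothesis. A potential complication is that the sparse structure might in principle allow accidental identities among the $\phi_k^{(i,j)}$; ruling this out cleanly is the technical heart of the argument.

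Granting the no-extra-relations claim, $V_{i,j}\equiv 0$ reduces to ``$\D$ restricted to $\Sigma_{i,j}$ is a scalar multiple of $\mathbf{1}_{\Sigma_{i,j}}$'', with the analogous statement for $K_{i,j}$ in terms of $\A$. Since the $d_k$ are distinct and positive (so too the $1/d_k$), each of these conditions holds iff $|\Sigma_{i,j}|\le 1$; but $|\Sigma_{i,j}|=1$ is incompatible with the orthogonality identity (which would then force the single surviving $\phi_k$ to be identically zero), so both reduce to $\Sigma_{i,j}=\emptyset$. Thus $\V$ and $\K$ share exactly the same set of structurally-zero entries.

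Finally I would close as follows. Whenever $\Sigma_{i,j}=\emptyset$, both $V_{i,j}$ and $K_{i,j}$ are identically zero and so vanish everywhere. Whenever $\Sigma_{i,j}\neq\emptyset$, both $V_{i,j}(\bomega)$ and $K_{i,j}(\bomega)$ are non-trivial real-analytic functions on the angle space and hence equal zero only on a set of Lebesgue measure zero. Under any continuous joint distribution on $\bomega_\cZ$ this exceptional set has probability zero, and a union bound over the $q(q-1)/2$ pairs $(i,j)$ yields that the sparsity patterns of $\V$ and $\K$ coincide with probability one.
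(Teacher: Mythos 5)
Your proposal starts from exactly the same decomposition as the paper's proof---$V_{i,j}=\sum_k d_kR_{i,k}R_{j,k}$ and $K_{i,j}=\sum_k a_kR_{i,k}R_{j,k}$ share the products $\phi_k^{(i,j)}=R_{i,k}R_{j,k}$ and differ only in the weights---and the same dichotomy between structural zeros (automatically common to $\V$ and $\K$) and accidental zeros (a measure-zero event under a continuous prior). The difference is one of rigor: the paper disposes of the accidental case in two informal sentences (``requires specific values of $\D$, and/or specific relationships\ldots''), whereas you attempt to prove the statement that this hand-wave conceals, namely that when the support $\Sigma_{i,j}$ is nonempty the map $\bomega\mapsto V_{i,j}(\bomega)$ is not \emph{identically} zero (likewise $K_{i,j}$), so that its zero set is genuinely Lebesgue-null. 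You are right that this is where the real content lies, and your reduction---the relation space of the $\phi_k^{(i,j)}$, the orthogonality identity $\sum_k\phi_k^{(i,j)}\equiv 0$, distinctness of the $d_k$, impossibility of a singleton support---is a sound skeleton.

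The gap is the ``no extra relations'' lemma, which you flag but do not prove, and whose sketched induction does not go through as stated. Two concrete problems. First, when the freshly added rotator acts on rows $a,b$ and the pair under study is $(i,j)=(a,b)$ itself, the updated product is $\phi_k^{(a,b)}=\cos\omega\sin\omega\,(R_{b,k}^2-R_{a,k}^2)+(\cos^2\omega-\sin^2\omega)R_{a,k}R_{b,k}$; killing a putative new relation then requires controlling linear relations among the \emph{squares} $R_{a,k}^2$, $R_{b,k}^2$, objects outside your inductive hypothesis, so the hypothesis must be strengthened before the induction closes. Second, when $i=a$ and $j\notin\{a,b\}$, separating the coefficients of $\cos\omega$ and $\sin\omega$ shows the new relation space is the intersection of the old relation spaces for $(a,j)$ and $(b,j)$; if the supports $\Sigma_{a,j}$ and $\Sigma_{b,j}$ of those two pairs are disjoint and nonempty, that intersection restricted to the new support is two-dimensional, so the lemma as you state it (relations spanned by $\mathbf{1}$ alone) can fail. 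The architecture is nonetheless salvageable, because you do not need the lemma at full strength: it suffices that every relation vector be constant on blocks of size at least two (each block carries its own orthogonality identity, ruling out singletons), which is incompatible with the distinct $d_k$ unless the support is empty---and the problematic disjoint-support case still has this weaker property. As written, though, the technical heart is asserted rather than established; to be fair, the same criticism applies to the paper's own proof.
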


\begin{proof}
For any $(i,j)$ pair,
\[V_{i,j} = \sum_{k=1}^q d_k R_{i,k} R_{j,k}\textrm{ and } K_{i,j}
= \sum_{k=1}^q a_k R_{i,k} R_{j,k}.\]
Therefore, zero values of $V_{i,j}$ and $K_{i,j}$ follow when
\[\forall k\in\{1,\dots,q\}\; R_{i,k}=0\textrm{ or }R_{j,k}=0.\]
However, any other case giving $V_{i,j}=0$
requires specific values of $\D$, and/or specific
relationships among elements of $\V$ and $\D$ defining the deterministic constraint that the
above sum be zero.  Such a constraint will not yield $K_{i,j}=0$ under a continuous prior over the angles.
\end{proof}

\section{Sparsity Priors on Eigenstructures
\label{sec:priors}}

We overlay the theoretical framework above with priors that define
interesting theoretical models of random variance matrices as well as
the specifications necessary for Bayesian analysis.

\subsection{Class of Priors}\index{sparsity prior}

 We specify priors that give positive probability to zero values among the angles, allowing
 row and column flips  via angles of $\pi/2$, and that otherwise draw angles independently from
 a continuous distribution.  Specifically, the $m$ angles $\omega_{i,j}$ are a random sample from
 a distribution with density
 \begin{equation} \label{eq:prior}
p(\omega) = \beta_{\frac{\pi}{2}}I(\omega=\pi/2) +
(1-\beta_{\frac{\pi}{2}})\beta_0I(\omega=0) \\
+ (1-\beta_{\frac{\pi}{2}})(1-\beta_0)p_c(\omega)
\end{equation}
where $I(\cdot)$ is the indicator function and $p_c(\omega)$ a continuous density on $(-\pi/2,\pi/2).$

Since $\omega_{i,j}=\pi/2$ does not effect the sparsity of $R$ and is
needed for permuting the effects of the eigenvalues as discussed
earlier, we do not want to penalize permutations in the same way as
other non-zero angles. We specify the prior in three stages. First
with probability $\beta_{\frac{\pi}2},$ $\omega_{i,j}=\pi/2$ to
complement the constraint on eigenvalues $d_i$ being ordered.  Then,
for angles that do not induce a permutation, we  allow zero
values with a non-zero conditional probability $\beta_0.$ Finally,
conditional on $\omega_{i,j}\ne 0$ or $\pi/2,$ it follows a
continuous prior $p_c(\cdot).$

There are various choices of the continuous prior component $p_c(\cdot)$.
Our examples here use a specific form that seems relevant for use as a routine, namely
\begin{equation} \label{eq:ctsprior}
p_c(\omega) =c(\kappa)\exp\left\{\kappa\cos^2(\omega)\right\}I(|\omega|<\pi/2)
\end{equation}
where $\kappa>0$ and $c(\kappa) $ is a normalizing constant.  In Bayesian analyses
via reversible jump MCMC methods we need the value of $c(\kappa)$ and
it can be easily evaluated using any standard numerical integration technique.  This
prior  is unimodal and symmetric about zero, so
represents appropriate centering relative to the \lq\lq null
hypothesis" value at zero.  The prior concentrates more around zero
for larger values of $\kappa,$ while $\kappa\to 0$ leads to the
limiting uniform distribution on $(-\pi/2,\pi/2).$ The specific
mathematical form is also suggested by the forms of conditional
likelihood functions for angles in normal models, as noted below in
Section~\ref{sec:explore}.

The prior is completed by specifying a distribution for the
eigenvalues $\D=\textrm{diag}(d_1,d_2,\dots,d_q)$ of $\V.$ As
discussed above, we take them ordered as $d_1>d_2>\cdots >d_q.$ The
natural, conditionally conjugate class of priors takes the $d_j$ as
ordered values of $q$ independent draws from an inverse gamma
distribution: given some chosen hyperparameters $(\eta_1,\eta_2),$
draw $d_i^{-1} \sim Ga(\eta_1/2,\eta_2/2)$ independently then impose
the ordering.

A specified prior over $(\bomega,\D)$ leads to the implied prior over
$\V$ and $\K,$ and within that a prior over the sparsity structure
that relates to the random graphical model induced.  Simulation of
$p(\bomega,\D)$ yields simulations from the latter. One aspect of
interest is to understand how sparsity in $\R$ is related to the
number of rotators.  A follow-on question is how these then relate to
sparsity in $\K$ and hence the sparsity of the implied graph.  This is
trivially explored by simulation and then simply counting the number
of zeros in $\R$ and $\K$. For a given set of rotator pairs $\cZ$  with
$z=|\cZ|,$  randomly pick which rotations will be non-zero then sample their
angles uniformly and generate $\R$ and $\K$. We repeat this process
10,000 times for each $0<z<m$.  For each dimension $q=20,30$,
Figure~\ref{fig:sparsity} shows the
median proportion of zeros in $\R$ and $\K$  as the
proportion of non-zero rotators increases. Note how quickly
the sparsity of $\K,$ defining the sparsity of the underlying graph,
decreases relative to $\R$.   This gives some insights into how the choice of the
prior sparsity probability $\beta_0$ plays a role in generating sparse graphs.

\begin{figure}[hbtp!]
\centering
\begin{tabular}{rl}
\hspace{-0.4cm}
\includegraphics[width=0.5\textwidth]{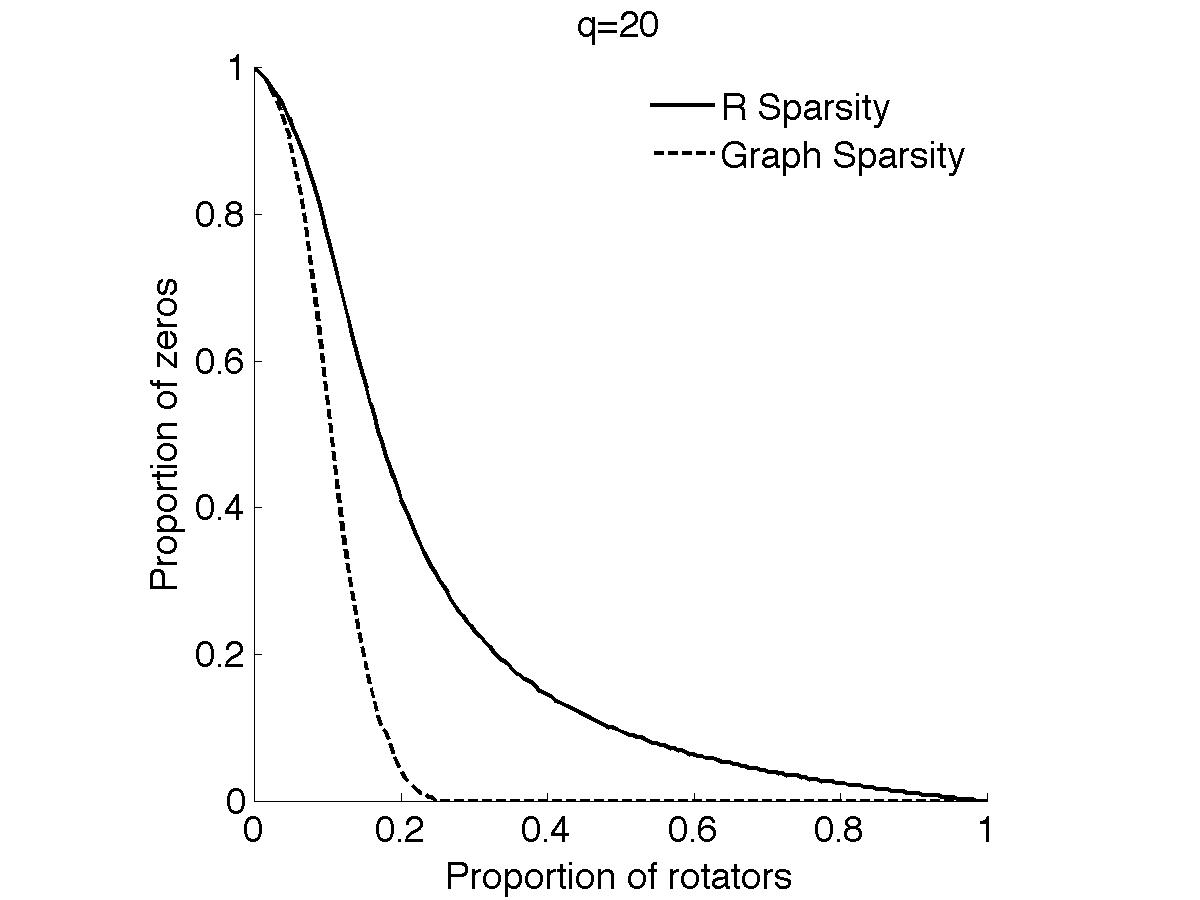} &
\hspace{-1.5cm}
\includegraphics[width=0.5\textwidth]{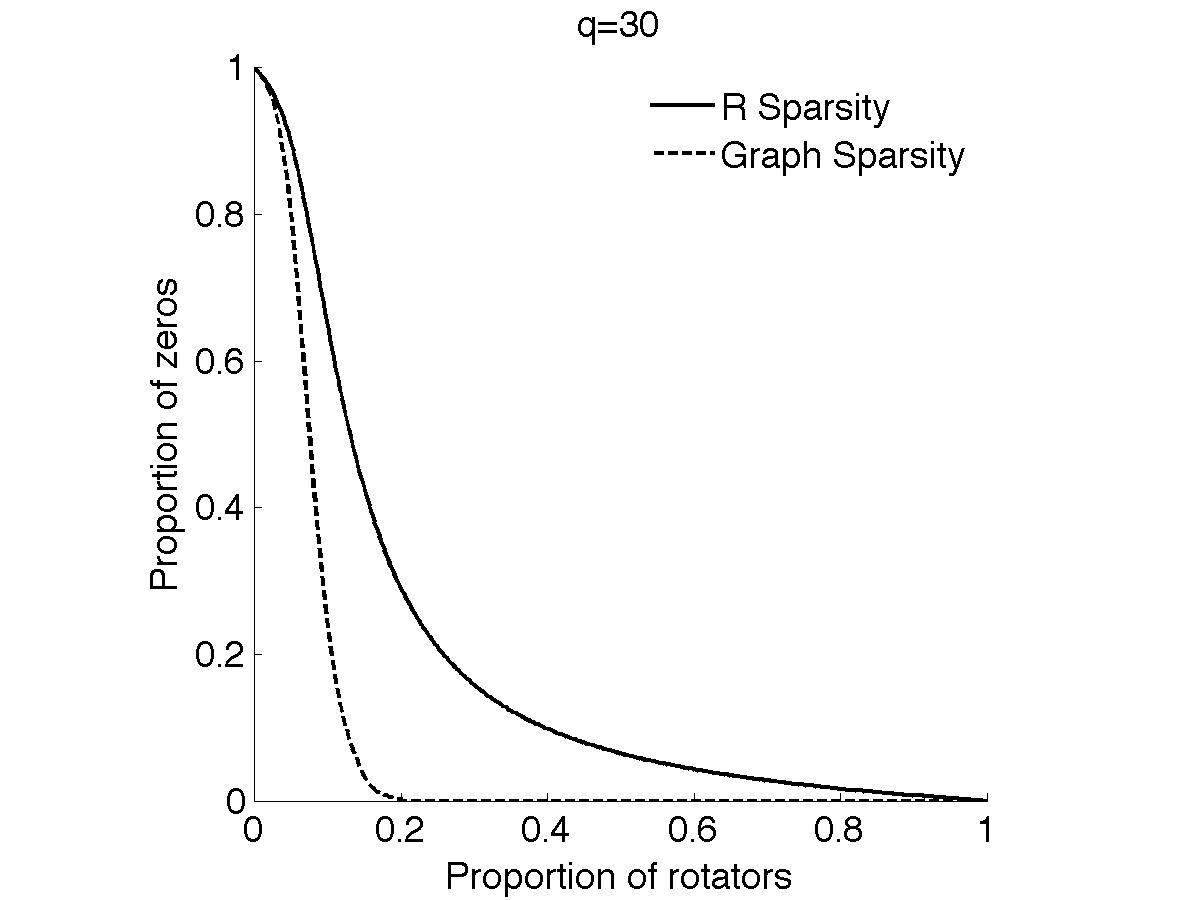}\\
\end{tabular}
\caption{Sparsity of $\R$ and the graphical model represented by $\K$ conditional
  on the number of random rotators in the model.  \index{sparsity prior} For $q=20$ and 30, this displays
  the prior median percent sparsity, i.e., proportion of zeros  in $\R$ and
  $\K$ out of the possible $q(q-1)$ and $q(q-1)/2$ respectively.  The prior for
 location of rotator pairs and values of the non-zero angles are each uniform, conditional
 on a given number of rotators selected.
  \label{fig:sparsity}}
\end{figure}

\section{Likelihood and Exploratory Analyses
\label{sec:explore}}

We discuss aspects of the likelihood function for the new parametrization $(\bomega,\A)$ without
considering sparsity, and then link that to an easily implemented forward selection algorithm that we have
found of use in defining starting values for full MCMC analysis under our sparsity priors.

\subsection{Likelihood\label{sec:like}}

Consider  a random sample  $\X=\{\x_1,\ldots, \x_n\}$ where
$ \x_i \sim N_q(\0,\V)$. With sum-of-squares
matrix $\S$, the log likelihood function has a form in $\R$ that is a
constant minus $\textrm{tr}(\R\A\R'\S)/2.$ Note that
$\O_{i,j}(\omega_{i,j})$ can be mapped onto an underlying $2\times 2$
rotation matrix $\G(\omega)$ where, for any $\omega,$
\begin{equation} \label{eq:ghkh}
\O_{i,j}(\omega) = \I_q+\H_{i,j}'(\G(\omega)-\I_2)\H_{i,j}\quad\textrm{with}\quad
\G(\omega) = \begin{pmatrix} \phantom{-}\cos(\omega) & \sin(\omega)\\
					   -\sin(\omega) & \cos(\omega)\end{pmatrix}
\end{equation}
and
\begin{equation} \H_{i,j}  \ = \
\bordermatrix{ 		&    &           &   &  i  &   &           &   & j  &    &          & \cr
                             & 0 & \cdots & 0 & 1 & 0 & \cdots & 0 & 0 & 0 & \cdots & 0 \cr
		          	& 0 & \cdots & 0 & 0 & 0 & \cdots & 0 & 1 & 0 & \cdots & 0
			}.
\end{equation}
Write $\R=\R_{i,j,0} \O_{i,j}(\omega_{i,j}) \R_{i,j,1}$; that is, $ \R_{i,j,0}$ is the
product of ordered rotators preceding $\O_{i,j}(\omega_{i,j}),$ and  $\R_{i,j,1}$ that following.
Also, define $\S_{i,j}=\R_{i,j,0}'\S\R_{i,j,0}$ and $\A_{i,j}=\R_{i,j,1}\A\R_{i,j,1}'.$
Note that  $\S_{i,j}$ has the interpretation of a decorrelated version of $\S$ based on the subset of rotators
represented in $\R_{i,j,0},$ i.e., all those preceding $(i,j)$ in the product making up $\R.$
Then, as a function of $\omega_{i,j}$ conditional on all other parameters, the log likelihood
reduces to
\begin{equation} \label{eq:omegaloglike}
\textrm{log}\
  p(\X|\omega_{i,j},-) = \textrm{c} -\textrm{tr}[
  \bPsi_{i,j}\G(\omega_{i,j})'\bPhi_{i,j}\G(\omega_{i,j})
  +\G(\omega_{i,j})'\bGamma_{i,j} ]/2
\end{equation}
where
$\bPhi_{i,j} =\  \H_{i,j}\S_{i,j} \H_{i,j}',$
$ \bPsi_{i,j} =\  \H_{i,j}\A_{i,j} \H_{i,j}$ and $\bGamma_{i,j}=\ \H_{i,j}\A_{i,j} \S_{i,j} \H_{i,j}'-\bPsi_{i,j}\bPhi_{i,j}.$
Some specific points to note are as follows:
\begin{itemize}

\item The form of the conditional likelihood is the kernel
  of a matrix Bingham von-Mises Fisher distribution for\index{Bingham von-Mises Fisher distribution}
  $\G(\omega_{i,j})$~\cite{Hoff2009}, which suggests such
  distributions as conditionally conjugate priors.

\item As a function of the scalar angle $\omega_{i,j},$ it is
  trivially shown that the log likelihood is a quadratic form in
  $(\sin(\omega_{i,j}), \cos(\omega_{i,j}))$.  It is easy to
  numerically maximize this conditional log likelihood. As a result,
  iterative maximum likelihood estimates can be derived by
  sequentially maximizing the above conditional likelihood functions
  as we iterate over rotators $(i,j),$ coupled with conditional
  maximization over the eigenvalues.

\item In the special case of $\R_{i,j,1}=\I_q,$ i.e., when $(i,j)$ is
  the right-most rotator pair and $\A_{i,j}=\A,$ the conditional
  likelihood can be maximized analytically if the diagonal of $\A$ is
  not constrained to be ordered. The maximizing $\omega_{i,j}$ value
  satisfies $\textrm{tan}(2\hat\omega_{i,j}) =
  2s_{i,j}/(s_{j,j}-s_{i,i})$ where $s_{a,b}$ are the scalar entries
  of the \lq\lq decorrelated'' sample variance matrix $\S_{i,j}.$
  Given this value, including $\A$ in the conditional log likelihood
  maximization gives the following:
\begin{equation} \label{eq:MLE} \begin{split}
	&\hat{\omega}_{i,j}=\frac{1}{2}\textrm{arctan}\left(\frac{2s_{i,j}}{s_{j,j}-s_{i,i}}\right), \\
	 &n\hat{a}_i^{-1}=s_{i,i}\textrm{cos}^2(\hat{\omega}_{i,j})+s_{j,j}\textrm{sin}^2(\hat{\omega}_{i,j})+2s_{i,j}\textrm{cos}(\hat{\omega}_{i,j})\textrm{sin}(\hat{\omega}_{i,j}),\\
	 &n\hat{a}_j^{-1}=s_{j,j}\textrm{cos}^2(\hat{\omega}_{i,j})+s_{i,i}\textrm{sin}^2(\hat{\omega}_{i,j})-2s_{i,j}\textrm{cos}(\hat{\omega}_{i,j})\textrm{sin}(\hat{\omega}_{i,j}),\\
	&n\hat{a}_k^{-1}=s_{k,k}\textrm{ for }k\neq i,j.
\end{split} \end{equation}

\item Continuing in the above case, if all correlation between
  variables $i$ and $j$ has been rotated away by the application of
  the preceding rotators so that $s_{i,j}=0,$ then the conditional MLE
  of $\omega_{i.j}$ is zero. In this case, it can also be shown that
  the conditional likelihood function in $\omega_{i,j}$ is
  proportional to $\cos^2(\omega_{i,j}).$

\item The above confirms the role of a continuous prior $p_c(\cdot),$
  as in \eqno{ctsprior}, as a conditionally conjugate prior centered
  around the region of no residual correlation between the two
  variables.

\end{itemize}

\subsection{An Exploratory Analysis Algorithm}\label{sec:explore_alg}

The investigations of likelihood structure above suggests a simple
exploratory analysis that can be of use in generating insights into
potential sparsity structure as well as, particularly, defining
starting values for a full Bayesian MCMC-based analysis of the sparse
eigenmatrix model.  This is discussed here in the case of $\kappa=0$
in the prior, for simplicity, although could be trivially modified.

Begin with $\S_*=\S$ and $\R_*=\I_q$ defining the \lq\lq current''
versions of the decorrelated sample variance matrix and corresponding
candidate eigenmatrix, respectively.

\begin{enumerate}
\item Set variable index $i=1$ and $j=2.$
\item Compute the sample correlation $r_{i,j}$ from $\S_*$.
\item If $|r_{i,j}|$ is large enough based on some pre-specified
  threshold which can be linked to the log likelihood difference it
  implies, add a rotator on pair $(i,j)$
  with angle $\hat{\omega}_{i,j}$ from \eqno{MLE}.
\item If the choice is to include a new rotator, update $\R_*$ to
  $\R_*\O(\hat\omega_{i,j}) $ and further decorrelate the sample
  variance matrix by updating $\S_*$ to
  $\O(\hat\omega_{i,j})'\S_*\O(\hat\omega_{i,j}).$
\item Sequence through the remaining $(i,j)$ in the order of the
  rotations in \eqno{decomp}.  Finally, set $\A_*$ to be the MLE based
  on the decorrelated $\S_*$.
\end{enumerate}
This forward selection process successively adds optimized rotators
via right multiplication, building up the corresponding sequence of
pairs of variable indices $(i,j)$ to define an empirical set $\cZ$ of
included rotators. It delivers this empirical estimate of $\cZ$ and
the corresponding, optimized estimate of $\R$, and hence of $\V$ and
$\K$ based on a final re-maximization of the likelihood for $\A$ given
the optimized $\R.$ There are many rules that can be used for the
thresholding in step 3. For instance, we could use the
absolute value of the conditional MLE, $\hat{\omega}_{i,j}$. However,
the effect of the rotation on the likelihood is unclear as it depends
on the eigenvalues which makes a particular threshold hard to
interpret. Simple thresholding on the current, \lq\lq residual''
sample correlation $|r_{i,j}|$ is natural and
interpretable; the squared correlation is the degree of residual
structure in $S_*$ that will be removed in that iteration of the
exploratory algorithm.   Hence a natural approach is to add a new rotator so
long as $|r_{i,j}|>\rho$ for some specified threshold $\rho.$

We note that this fast exploratory algorithm cannot constrain the
diagonal of our estimate $\A_*$ to be ordered. For exploration
purposes, this is not an issue as the resulting estimate of $\V$ and
$\K$ will not be affected. Furthermore if we denote by $\A_*$ an
exploratory estimate, and let $\P$ be the permutation matrix such that
$\A=\P\A_*\P'$ where the diagonal of $\A$ is ordered, then
$\V=\R\A\R'$ where $\R=\R_*\P$. In words, we are simply finding a
sparse spectral decomposition that has the interpretation of a forward
selection process based on residual correlation. We then order the
eigenvalues and their corresponding eigenvectors. Generating MCMC
starting values for $\bomega$ is now simply a matter of finding the
unique $\bomega$ that represents $\R$ in the general Givens rotator
product representation.  \cite{Anderson1987} recursively derive
$\bomega$ exactly from $\R$.  Just as many zero elements in $\bomega$
induce zeros in $\R$, the starting value for $\bomega$ based on the
decomposition of the sparse $\R$ will have many elements set to zero
making this a very fast and effective method for finding sparse
starting values.

In supporting material, we provide code implementing
this overall algorithm for interested readers.

\subsection{A Breast Cancer Genomics Example} 

\index{breast cancer genomics}\index{gene expression data}

We consider a subset of the microarray-based gene
expression data that is analyzed in more detail in Section~\ref{sec:mixtures} below. The subset of
size $n=66$ represents tumors that would be regarded as aggressive in
term of their expression profiles, based on higher levels of
expression of genes related to the two key hormonal pathway: the ER
(estrogen receptor) and Her2 growth factor
pathways~\cite{West2001,Huang2002}.  Activity of genes in these two
primary, distinct pathways, and their interactions with multiple other
biological pathways in cell growth and development, play into our
understanding of the heterogeneity of breast cancer, and critically
into advances in understanding clinically relevant cancer
subtypes~\cite{sorlie2004,Carvalho2008,Lucas2009a}.  Several of the
$q=20$ genes, notably the leading 6 in Figure~\ref{fig:exp_demo_corr}
(CA12, GATA3, HNF-3$\alpha$, LIV-1, Annexin, TFF3), are in part
co-regulated in the ER network, some being directly transcribed by ER
along with other factors.  These genes vary across this subset of
samples and are at relatively high levels of expression. These genes,
as well as other breast cancer biomarker genes (C-MYB, BCL-2) that
also interact with the ER network, play roles in multiple biological
pathways; as a result, their inter-relationships in expression are
more complicated than a simple one-dimensional ER factor would
explain.  Three of the variables (ERB-B2, HER2a, HER2b) are highly
related read-outs of activity of the hormonal Her2 pathway (the first
two are in fact different sequences from the same primary Her2 gene);
other genes in the sample (GRB7, CAB1) are known to be regulated or
co-regulated with Her2.  Two additional gene sequences (BRCA1, BRCA2)
relate to inherent susceptibility to breast cancer; their
transcriptional relationships with ER and Her2 are poorly understood,
although higher levels tend to be related to low ER and HER1 activity.
To give a sense of robustness, 5 additional variables are included:
the Junk genes represent random Gaussian noise.

\begin{figure}[bp!]
\centering
\begin{tabular}{cc}
\hspace{-1.0cm}$\rho=0.5$ & \hspace{-1.8cm}$\rho=0.25$ \\
\hspace{-1.0cm}\includegraphics[width=0.5\textwidth]{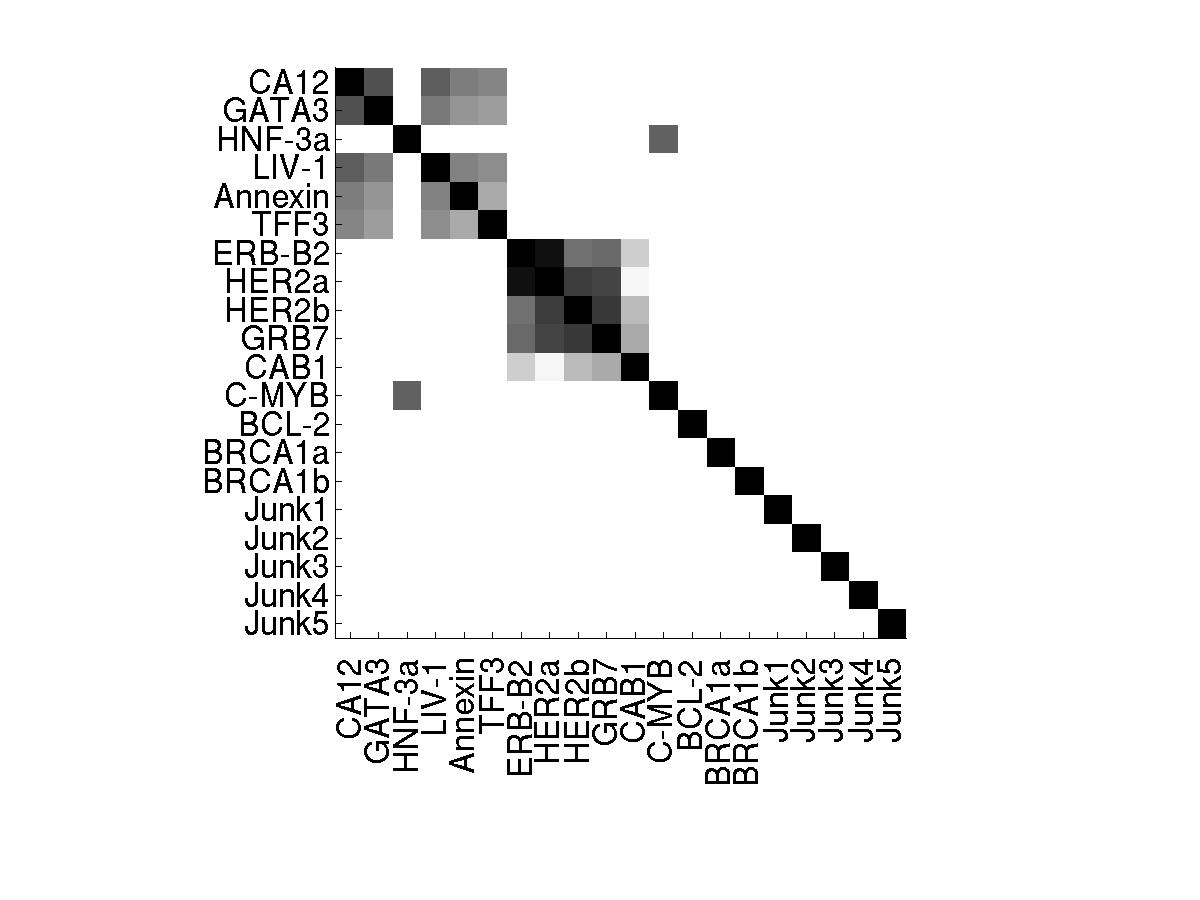} &
\hspace{-2.0cm}\includegraphics[width=0.5\textwidth]{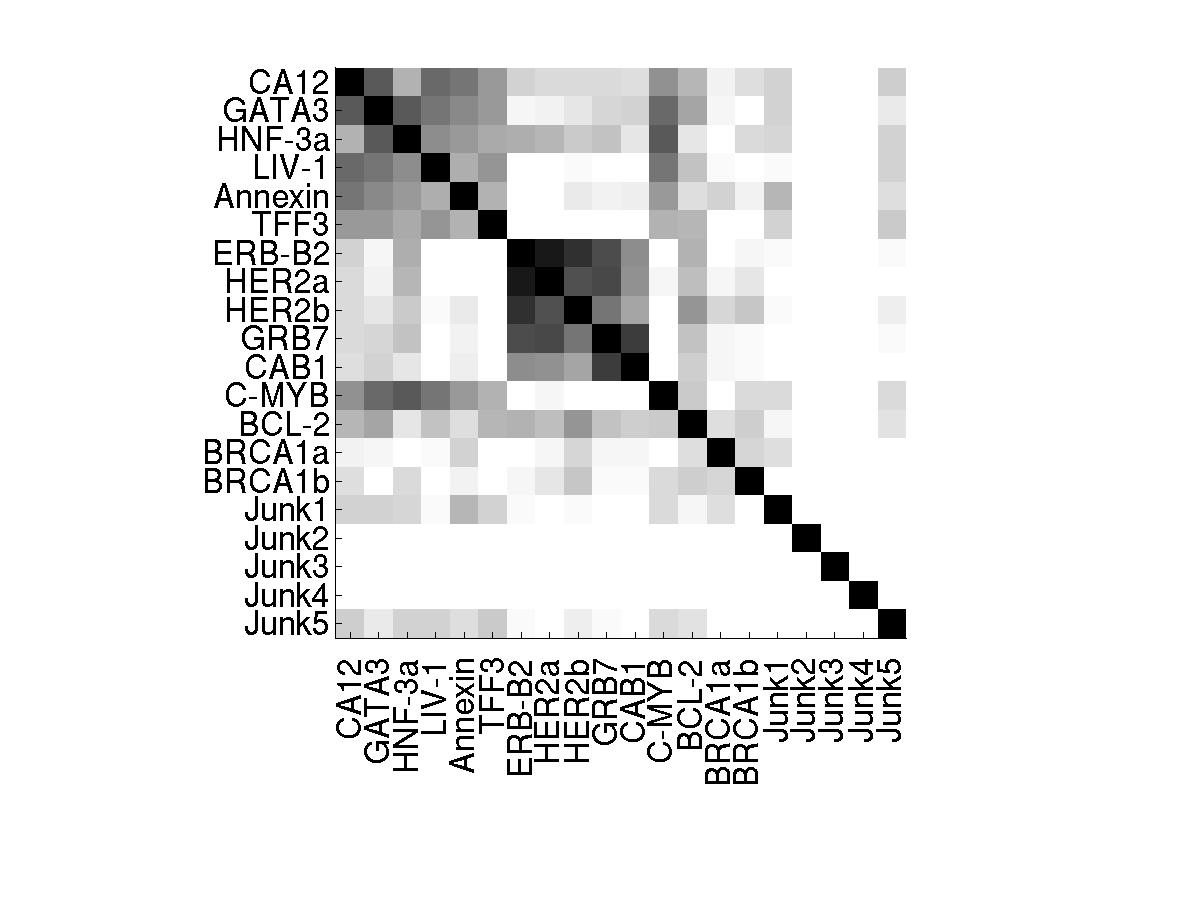} \\
\multicolumn{2}{c}{Sample correlations} \\
\multicolumn{2}{c}{\includegraphics[width=0.5\textwidth]{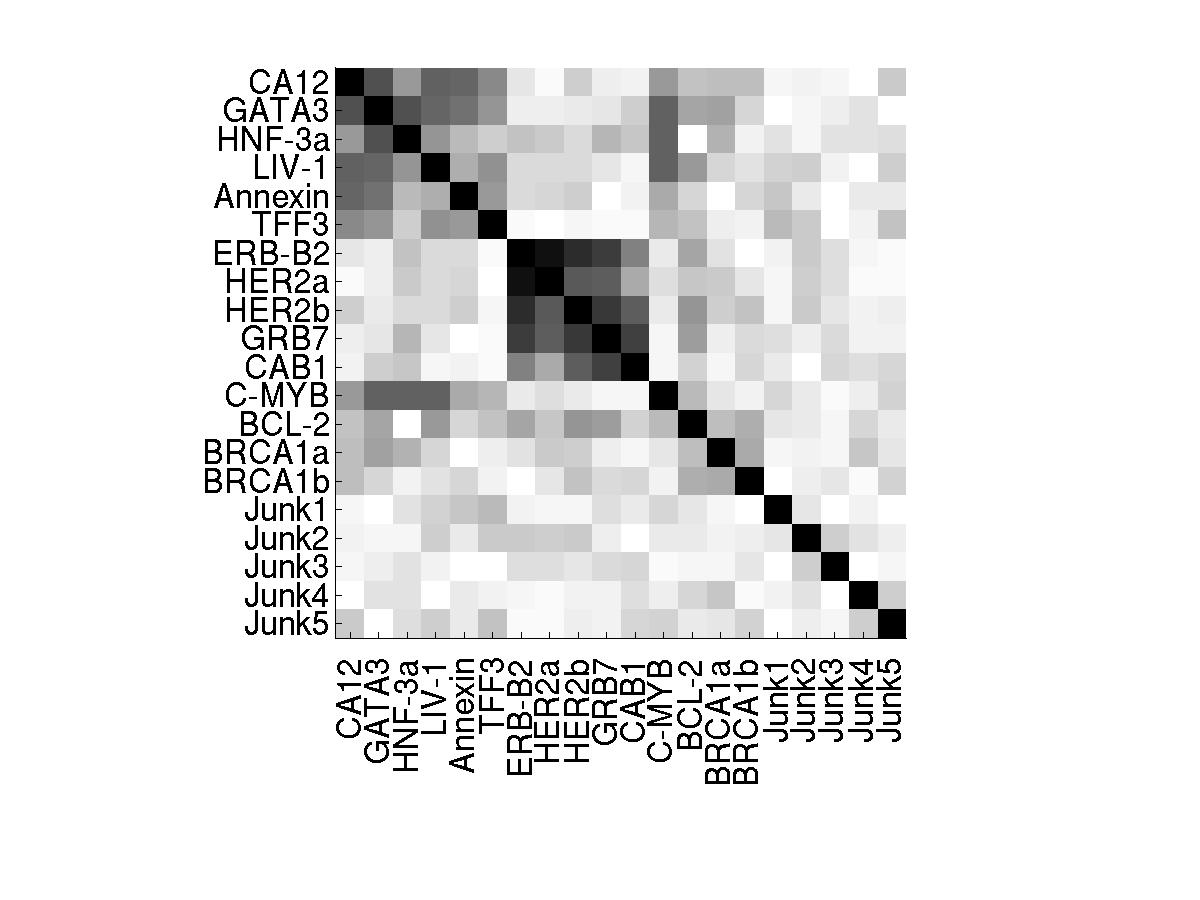}}
\end{tabular}
\caption{Breast cancer gene expression data example in Section~\ref{sec:explore_alg}.
Grey-scale heat maps show absolute values of elements in the  estimated
correlation matrix using the exploratory algorithm (white$=0$ correlation, black$=1$).
The first two frames correspond to stopping the algorithm when the maximum absolute value of the
 residual correlation drops below $\rho=0.5$ compared to $\rho=0.25,$ respectively. The third frame
shows the full sample  correlation matrix.
\label{fig:exp_demo_corr}}
\end{figure}

\begin{figure}[bp!]
\centering
\begin{tabular}{cc}
\hspace{-.1cm}$\rho=0.5$ & \hspace{-0.4cm}$\rho=0.25$ \\
\hspace{-.1cm}\includegraphics[width=0.4\textwidth]{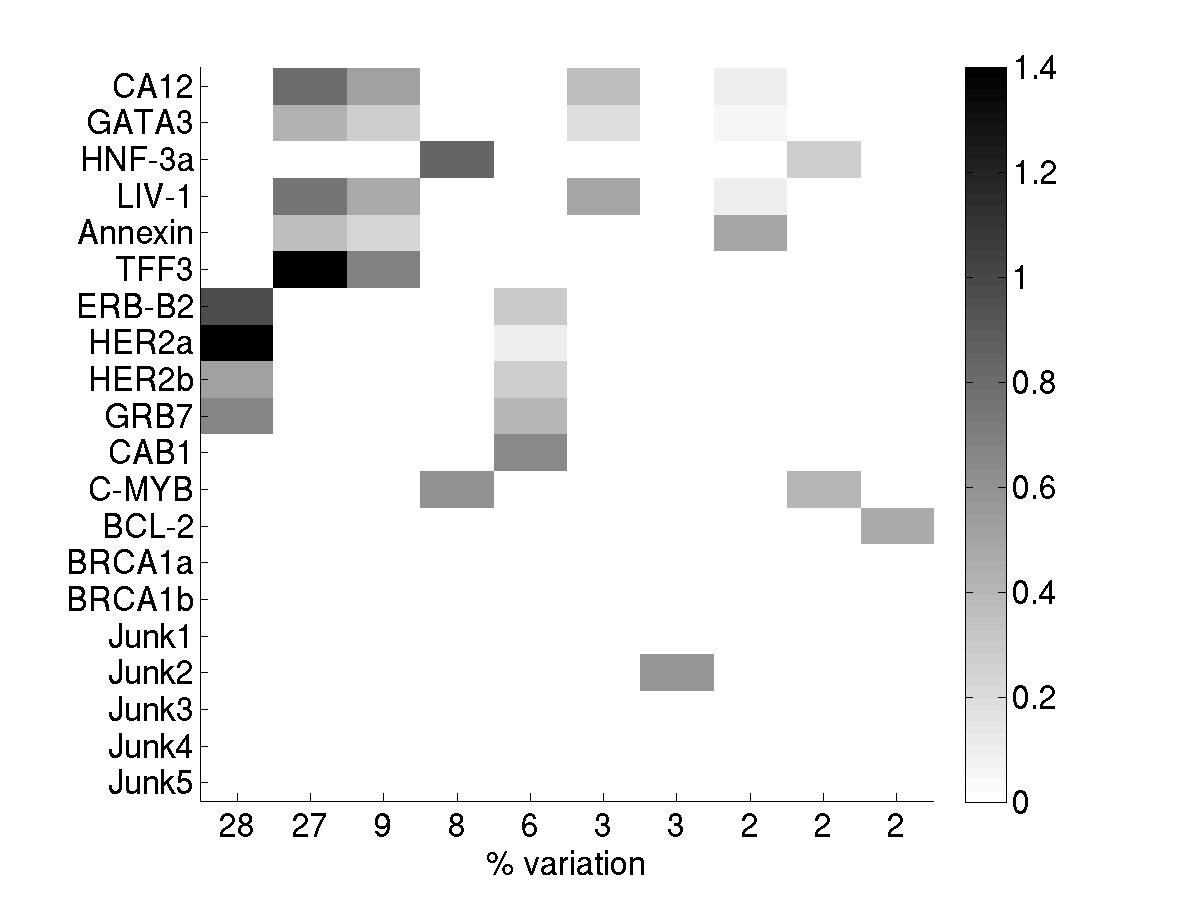} &
\hspace{-0.4cm}\includegraphics[width=0.4\textwidth]{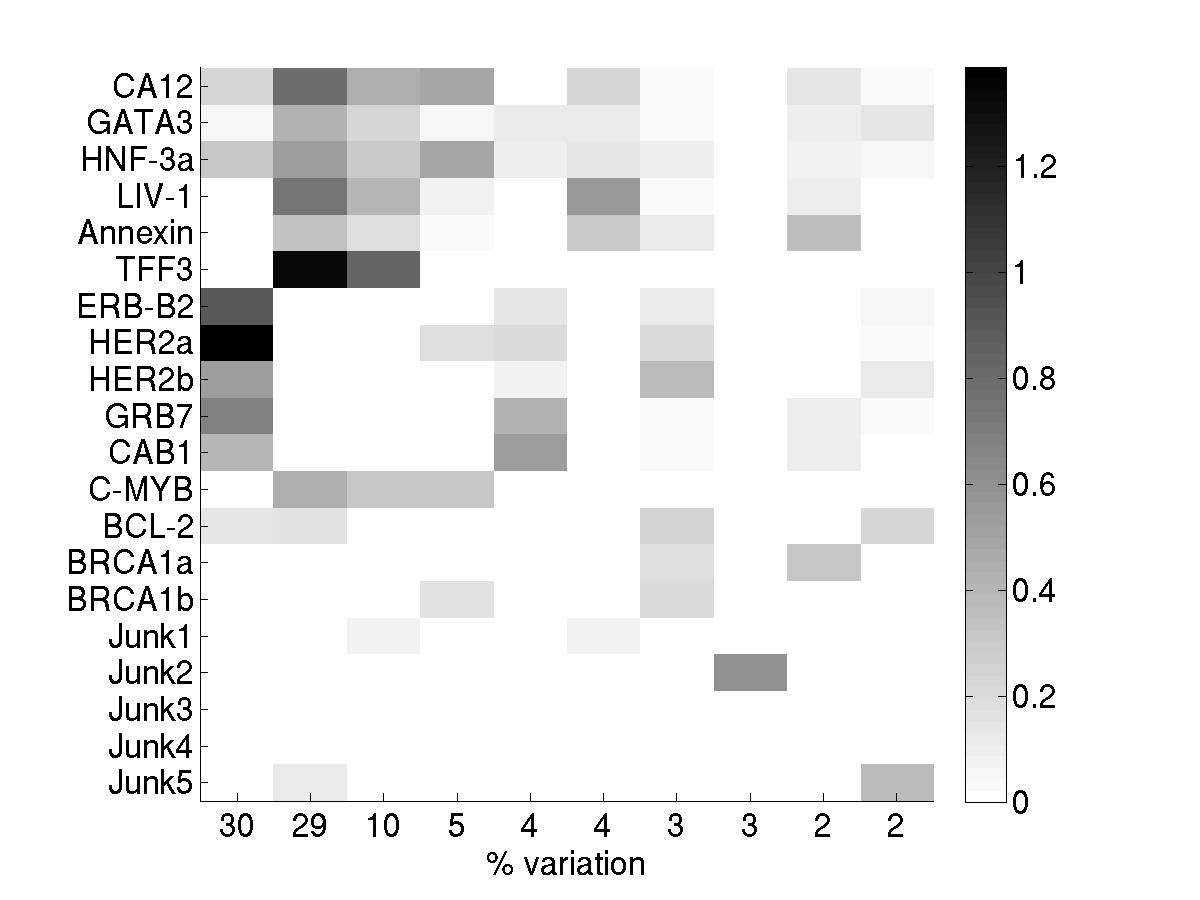} \\
\phantom{0}\\
\multicolumn{2}{c}{Sample eigenstructure} \\
\multicolumn{2}{c}{\includegraphics[width=0.4\textwidth]{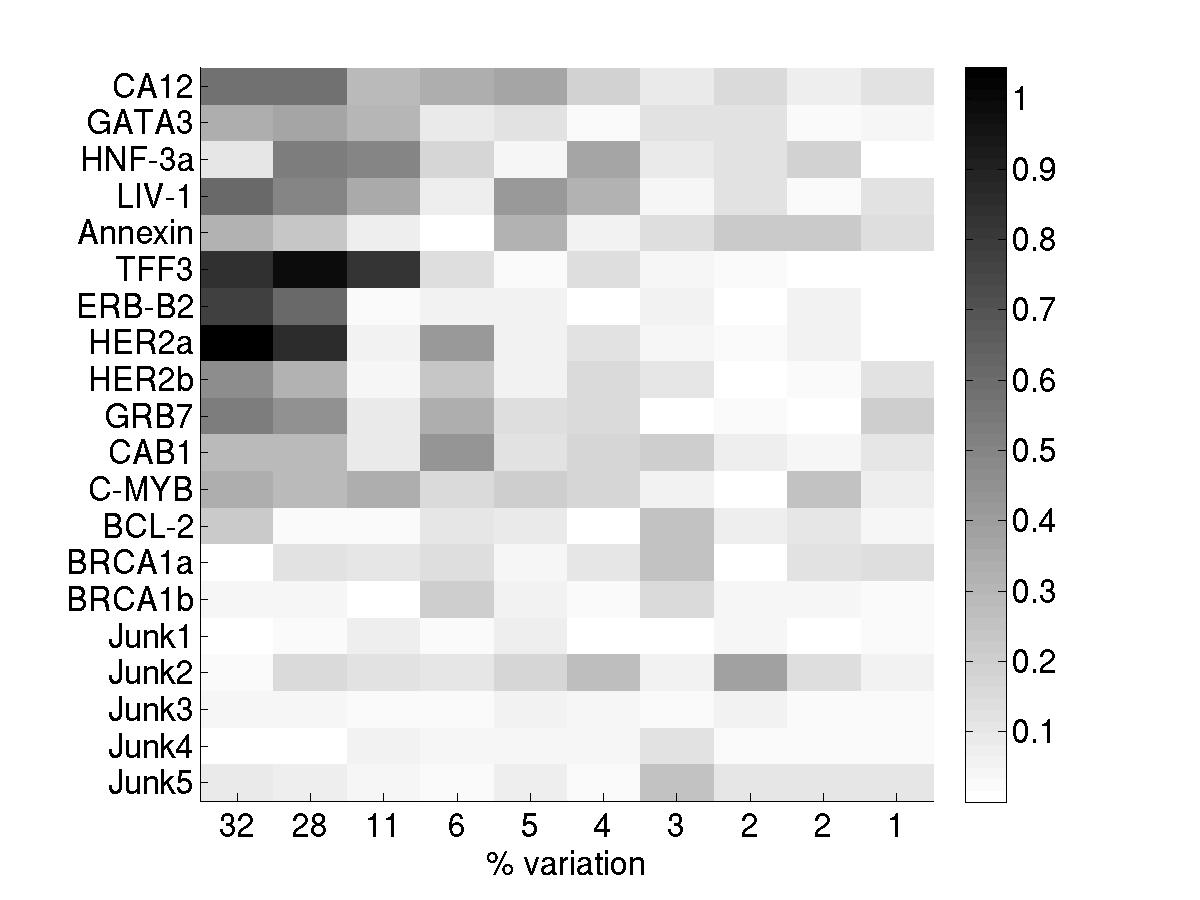}}
\end{tabular}
\caption{Breast cancer gene expression data example in Section~\ref{sec:explore_alg}.
Grey-scale heat maps show absolute values of elements in the first ten columns of the estimated  scaled eigenmatrix
$\R\D^{1/2}$  corresponding to the correlation matrix in Figure~\ref{fig:exp_demo_corr}. \index{sparse eigenmatrix}
The x-axes show the percent variation explained
by each eigenvector (column) as implied by the empirical estimates of the $d_j$ in each case.
\label{fig:exp_demo_eig}}
\end{figure}

After centering each of the $q=20$ variables, the exploratory analysis
was applied twice for a comparison of choice of stopping rule: we used
thresholds of $|r_{i,j}|\ge 0.5$ and $|r_{i,j}| \ge 0.25.$ A third
analysis simply computes the sample correlation matrix and the
corresponding eigenmatrix. Graphical summaries of the final estimates of
the  correlation matrix and the corresponding scaled eigenmatrix appear
in Figures~\ref{fig:exp_demo_corr} and \ref{fig:exp_demo_eig}, respectively.
We can see the increase in sparsity in
moving from no thresholding (the sample eigenmatrix) to a threshold of
0.25 and then 0.5, and how the sparse Givens construction-- via this
simple exploratory estimation method-- naturally denoises the raw
sample estimates.  The major ER and Her2 \lq\lq clusters'' evident in
the correlation matrices are sustained as we move up through the
levels of thresholding, and the corresponding \lq\lq factor loadings''
structure represented in the eigenmatrices successively reduces the
numbers and patterns of genes related to each factor (column).  The
most sparse structure in the first row shows that-- assuming this
level of sparsity-- we uncover a dominant Her2 factor loaded on four
of the Her2 cluster of genes, two main ER factors, and a few minor
factors that each represent only modest levels of variation explained
while contributing to the break-down of the complexity of expression
relationships in the data.

\section{Bayesian Analysis and Computation
 \label{sec:mcmc}}
We discuss and develop Bayesian computation for model fitting and exploration,
presenting customized MCMC methods.

\subsection{Overview}
In target applications with modest and increasingly high values of
$q,$ and hence larger $m=q(q-1)/2,$ the focus is on sparse structures
so that posterior distributions will concentrate on smaller numbers of
non-zero angles.  In these circumstances, visiting every element of
$\bomega$ using a Gibbs sampling approach will be computationally
expensive and other MCMC strategies are recommended.  \index{Markov chain Monte Carlo (MCMC)} 
Most effective
MCMC analysis can be achieved using reversible jump Markov Chain Monte
Carlo (RJ-MCMC) \cite{Green1995}. \index{reversible jump MCMC} 
We have implemented such an
approach based on exploring the space of non-zero elements of
$\bomega$ using a birth/death RJ-MCMC.  Each move through this \lq\lq
model space'' involves proposed changes that introduce non-zero
values, including the possible values $\omega=\pi/2,$ and/or setting
current non-zero values to zero.  We present the details of the
sampling algorithm by first outlining an approximation to the
conditional posterior of a single, non-zero $\omega_{i,j}$ that we
recommend as conditional proposal distribution for the MCMC.

\subsection{Wrapped Cauchy Proposals}\label{sec:cauchy}

Consider any rotator pair $(i,j)$ assuming $\omega_{i,j}\ne 0.$ The
conditional posterior is proportional to the conditional mixture
prior, mixing a point mass at $\pi/2$ with $p_c(\cdot),$ multiplied by
the conditional likelihood of the form discussed in the previous
section-- the conditional likelihood for $\omega_{i,j}$ given all
other rotators and eigenvalues.  Our MCMC adopts a conditional
proposal distribution for $\omega_{i,j}$ based on direct
approximation. Specifically, we use a proposal with pdf\index{wrapped Cauchy distribution}
$$ g(\omega) \equiv
\beta_{\frac{\pi}{2}} I(\omega=\pi/2) +
(1-\beta_{\frac{\pi}{2}})g_c(\omega)$$   where the continuous
density $g_c(\cdot) $  is that of a wrapped Cauchy
chosen to approximate the conditional posterior for
$\omega_{i,j}$ conditional on $0<|\omega_{i,j}|<\pi/2;$ i.e., a Cauchy
``wrapped" onto the interval $(-\pi/2,\pi/2)$~\cite{Fisher1993}.
Specifically,
$$g_c(\omega) =  \frac{1}{\pi} \frac{\textrm{sinh}(2\sigma)}{\{\textrm{cosh}(2\sigma)-\textrm{cos}[2(x-\theta)]\}} $$
where $(\theta,\sigma)$ are chosen so that $g(\cdot)$ approximates the
conditional posterior under prior $p_c(\cdot). $ The proposed values
of $(\theta,\sigma)$ are based on direct numerical approximation. We
set $\theta$ as the exact conditional posterior mode; this is easily
evaluated numerically.  Under any conjugate prior over non-zero
values, which includes our recommended default prior $p_c(\cdot)$ in
\eqref{eq:ctsprior}, note that $\log p(\omega_{i,j}|\X,-)$ is a
quadratic form in $(\sin(\omega_{i,j}), \cos(\omega_{i,j}))$ on a
bounded domain and can be evaluated along with any number of
derivatives very quickly. Resulting numerical maximization is then
routine and extremely efficient.  At the solution $\theta,$ the
curvature generates a value for the scale $\sigma$ from
\[\frac{1}{\sigma^2} = -\left.\frac{ \partial^2}{ \partial \omega_{i,j}} \textrm{log}\ p(\omega_{i,j}|\X,-)\right|_{\omega_{i,j}=\theta}. \]
The wrapped Cauchy form can be viewed as a diffuse posterior
approximation-- the result of an initial Laplace approximation subject
to inflating the tails to ensure good coverage of the exact
conditional posterior. To deal with cases in which the mode $\theta$
lies on the boundary, simply replacing $g_c(\cdot) $ with a
$\textrm{Be}(\omega|0.25,0.25)$ density has been empirically found to
provides an effective, default proposal.

\subsection{Reversible Jump MCMC \label{sec:RJMCMC}}

Denote all  parameters of interest by $\bTheta=\{\cZ, \bomega, \A\}$
where, as introduced above,  $\cZ$ is the set of pairs of indices $(i,j)$
corresponding to included rotators with non-zero angles.  In an overall MCMC, suppose
we are at a current state at iterate $t$ with parameters
$\bTheta^{(t)}=\{\cZ^{(t)},\bomega^{(t)},\A^{(t)}\}$  with a current $z^{(t)}=|\cZ^{(t)}|$ rotators.
Consider now either adding or removing a rotator index pair
 from $\cZ^{(t)}$. Set probabilities of adding a rotator
(birth) and removing a rotator (death) at values denoted by $p_B$ and $p_D$, respectively.\index{reversible jump MCMC}
For a proposed birth, randomly select an ordered pair,
$(i^*,j^*) \in \cM\setminus\cZ^{(t)}$ to index a proposed angle,
and then generate a proposal $\omega^*_{i^*,j^*}$
from $g(\omega_{i^*,j^*})$ described in the Section~\ref{sec:cauchy}.
This implies the following birth step accept/rejection ratio:
\begin{equation} \label{eq:bdratio} \alpha_B = \frac{p(\X |
    \omega^*_{i^*,j^*},-)
    p(\omega^*_{i^*,j^*})(z^{(t)}+1)p_D}{p(\X|\omega^{(t)}_{i^*,j^*},-)
    p(0)g(\omega^*_{i^*,j^*})(m-z^{(t)})p_B}.
\end{equation}
For a proposed  death step,   choose an element from $\cZ^{(t)}$ uniformly and set its
corresponding angle to zero. The resulting rejection ratio is simply $\alpha_D=\alpha_B^{-1}.$
We then set $\omega^{(t+1)}_{i^*,j^*}=\omega^*_{i^*,j^*}$ with probability $\min(1,\alpha_B)$.
To facilitate better mixing, we do several reversible jump proposals in each MCMC iteration.
This results in the updated (possibly, of course, also the same) set of  rotator pairs
$\cZ^{(t+1)}.$

The MCMC next updates all
non-zero angles indexed in $\cZ^{(t+1)}$. For each pair
$(i,j)\in \cZ^{(t+1)}$,  generate a proposal $\omega^*_{i,j}$ from $g(\omega_{i,j})$
and set $\omega_{i,j}^{(t+1)}=\omega^*_{i,j}$ with probability
\begin{equation} \label{eq:mhratio} \min \left\{ 1,
  \frac{p(\X|\omega^*_{i,j},-)p(\omega^*_{i,j})g(\omega^{(t)}_{i,j})}
  {p(\X|\omega^{(t)}_{i,j},-)p(\omega^{(t)}_{i,j})g(\omega^*_{i,j})} \right\}.
\end{equation}
Since we update the elements of $\bomega^{(t)}$ in order, the recursive relationship
between the conditional log likelihoods given in \eqno{omegaloglike}
makes computation extremely fast. This recursive update is initialized at
$$
\A_{i_1,j_1}=\O_{i_1,j_1}(\omega^{(t)}_{i_1,j_1})'R^{(t)}\O_{i_1,j_1}(\omega^{(t)}_{i_1,j_1})\quad\textrm{and}\quad
\S_{i_1,j_1}=\S.
$$
The log likelihood can then be optimized and evaluated
quickly as described in the previous section. For $k>1$ as we step through updates of rotator $k,$  compute
\begin{equation*}
\begin{split}
\A_{i_k,j_k}&=\O_{i_k,j_k}(\omega^{(t)}_{i_k,j_k})'\A_{i_{k-1},j_{k-1}}\O_{i_k,j_k}(\omega^{(t)}_{i_k,j_k})\quad\textrm{and}  \\
\S_{i_k,j_k}&=\O_{i_{k-1},j_{k-1}}(\omega^{(t+1)}_{i_{k-1},j_{k-1}})'\S_{i_{k-1},j_{k-1}}\O_{i_{k-1},j_{k-1}}(\omega^{(t+1)}_{i_{k-1},j_{k-1}}).
\end{split}
\end{equation*}
Note that the recursions for $\A_{i_k,j_k}$ and $\S_{i_k,j_k}$ only involve
taking linear combinations of two rows and columns, so we can avoid
recomputing the entire eigenmatrix for every proposal. In words: (i)
start with the sum of squares matrix and the eigenmatrix; (ii)  remove
the first rotation by multiplying by its transpose; (iii)  perform a fast
Metropolis move by exploiting the quadratic form of the likelihood; (iv)
decorrelate the sum of squares matrix with the new rotator and
remove the next rotation from the eigenmatrix.

The final step is to update the entries of the diagonal precision matrix $\A$ from the
current value $\A^{(t)}. $ Let
$$\B = \R^{(t)\prime}\S\R^{(t)}.$$
Then for each $j=1,\dots,q$
\begin{equation}\label{eq:postaj}
p(a^{(t)}_j|\X,-) \propto \textrm{Ga}(a^{(t)}_j | (\eta_1+n)/2,(\eta_2 + B_{j,j})/2)
\ I(a^{(t)}_{j-1} < a^{(t)}_j < a^{(t)}_{j+1})
\end{equation}
where $Ga(x|a,b)$ denotes the pdf of the $Ga(a,b)$ distribution evaluated at $x.$  Since
the  eigenvalues are constrained to be ordered, the conditional
distributions are constrained as well. The resulting constrained gamma distribution is
sampled using  the inverse cdf method.

\section{Simulation Study and Comparisons\label{sec:simstudy}}

We make a detailed, simulation-based comparison of the modelling approach with analysis using
traditional Gaussian graphical modelling (GGM)
~\cite{lauritzen96,Dobra2004,Jones2005,carvalho:west:07,Dobra2011,Rodriguez2011}.
The GGM framework with decomposable graphs is directly comparable and stands as a
current  benchmark model  context.\index{Gaussian graphical model}\index{decomposable graphical model} 

The simulation study was conducted using zero mean normal models in each of
$p\in\{10, 20, 30,40,50,75,100\}$ dimensions, with a fixed sample size of $n=150$ observations.
Synthetic data sets generated from specific
model classes were analyzed using the sparse Givens approach and the GGM approach, the
latter using shotgun stochastic model search for the Bayesian analysis~\cite{Jones2005}.
Each analysis was repeated for 100 simulation samples.  The underlying models and
synthetic data generation proceeded as follows:
\begin{enumerate}
\item Generate a target precision matrix  $\K=\U'\U$ where $\U$ is upper triangular with:
 \begin{itemize}
    \item $\U_{i,i}=\sqrt{\nu_i}$ with $\nu_i\sim\chi^2_{p-i}$, $(i=1,\ldots,p), $
    \item $\U_{i,j}=u_{i,j}I(|u|_{i,j}>1)$ where $u_{i,j}\sim N(0,1)$,  $(j>i,\ i=2,\ldots,p-1), $ and
    \item the $\nu_i, u_{i,j}$ are mutually independent.
  \end{itemize}
\item Draw $n=150$ observations $\X$ as a random sample from $N(0,\K^{-1})$.
\item Fit the sparse Givens model using 15{,}000 MCMC iterations. Discard the first 10{,}000
and save the final 5{,}000 as a Monte Carlo sample for sparsity patterns and values of $\K.$
\item Fit the GGM model using 15{,}000 stochastic search iterations. Discard the first 10{,}000 and save the
final 5{,}000 graphs identified, their posterior probabilities and the parameters of the corresponding
posteriors for $\K$ on each graph.
\end{enumerate}
For prior distributions, the probability of including a \lq\lq free'' parameter
was set to $2/(p-1)$. In the sparse Givens models, a free parameter is an angle; in GGMs, it is
the probability of including a random edge. This specification aims to match the prior expectations of
degrees of sparsity between the two approaches.

Comparisons are based on measuring agreement between the approximate posteriors and the \lq\lq true''
underlying data-generating distribution.   For any precision matrix $\K_*$, we can directly compute the
Kullback-Leibler (KL) divergence \index{Kullback-Leibler divergence} 
of the $N(0,\K_*^{-1})$ distribution from the true, underlying
$N(0,\K^{-1})$ distribution. With both the MCMC posterior samples and the GGM search results,
we can then approximately evaluate the posterior distribution for the KL divergence of the
chosen model from the truth.

Figure~\ref{fig:simstudy} summarizes the posteriors for the KL divergences,
aggregated across 100 repeat samples. We can see that in 20 dimensions, both methods
perform similarly; this  is not surprising since there is reasonable amount of data
relative to the dimension. However, in 30 dimensions, the new sparse Givens approach is significantly
better, and its dominance is progressively more pronounced as the dimension increases.

\begin{figure}[htbp!]
\centering
\includegraphics[width=0.4\textwidth]{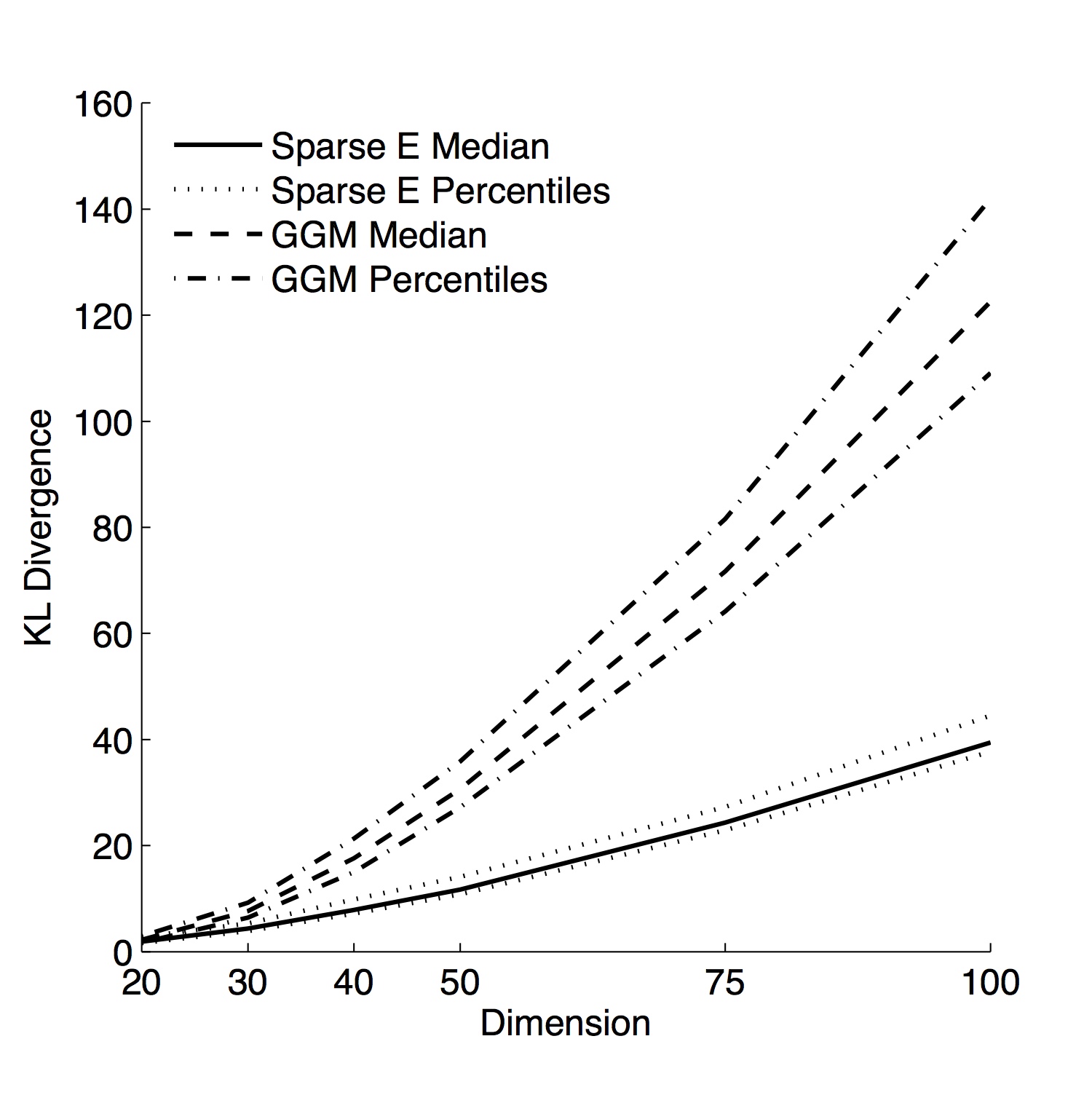}
\caption{\protect Simulation study comparing GGM to the sparse Givens (labelled \lq\lq Sparse E\rq\rq in the figure)
         methodology. Shown are the medial, 10th and 90th percentiles of the posterior for (logged values of)
          the KL divergence from the fitted model to the true underlying data generating model.
	\label{fig:simstudy}}
\end{figure}

\section{Mixtures of Sparse, Full-Rank Factor Models \label{sec:mixtures}}

\index{mixtures of factor models} \index{mixtures of normals} \index{Bayesian factor analysis} 

Many applied contexts involve evident non-Gaussian structure as well as measurement error overlaid on the underlying
dependency patterns we are interested in inferring via the sparse eigenmatrix models.
The gene expression example of Section~\ref{sec:explore_alg} is just one case-in-point.  First,
in the broader contexts of samples from the full breast cancer population, there is inherent non-Gaussianity
representing heterogeneity in cancer states. This heterogeneity can be regarded as arising from a
mixture of sub-populations, or \lq\lq subtypes'' that, in terms of expression data outcomes,
are hugely overlapping~\cite{sorlie2004,Carvalho2008,Lucas2009a}.
More broadly, use of discrete mixtures of Gaussians is a well-established
strategy for modelling what might be quite non-Gaussian distributional forms, whether or not there is an   inherent in mixture components and discrimination/classification~\cite{Escobar1995,Chan2008,Suchard2010}.  Second, measurement errors are ubiquitous. Again the
gene expression example and broader context is a good example, as the experimental and
data extraction contexts are well-known to overlay underlying biological variation with meaningful
uncertainties that must be accommodated within a more general model in order to avoid obscuring
relationships and leading to potential biases in resulting inferences~\cite{Lucas2006,Carvalho2008,Lucas2009,Lucas2010}.

\subsection{Mixture Models and Extension of MCMC Analysis}

We address the above, general considerations with Gaussian mixture models overlaid with measurement errors. Each mixture component has a variance matrix modelled via
the sparse Givens strategy; this can be directly interpreted as a sparse,
 full-rank latent factor model for underlying `\lq\lq structural''
dependencies. As a result, the overall framework is a generalized, adaptively sparse model  for
 \lq\lq mixtures of (full-rank, sparse) factor analyzers''~\cite{McLachlan2003,Rodriguez2011}.

Assume we observe $n$ independent $q-$vector observations
$\Y = \{ \y_1, \dots, \y_n \} $ where $ \y_i = \x_i + \bepsilon_i $ with independent measurement errors
$ \bepsilon_i \sim N(\0, \bPsi)$ having variance matrix $\bPsi=\textrm{diag}(\psi_1,\dots,\psi_q). $
Suppose the latent signals $\x_i$ are independently  drawn from a discrete mixture of multivariate normals
having pdf
$$ p(\x) = \sum_{c=1}^C w_cN(\x|\bmu_c,\bSigma_c).$$
Equivalently,
$$\y_i|\gamma_i \sim N(\bmu_c, \bSigma_c + \bPsi), \quad\textrm{where}\quad Pr(\gamma_i=c)=w_c,$$
involving the underlying latent  mixture component indicators $\gamma_i$
that are independently drawn from the multinomial distribution on cells $\{ 1:C\}$ with the vector of cell
probabilities  $\w=(w_1,\ldots,w_C)'.$

We develop this mixture model under sparse Givens factor structures for each of the
mixture components. That is,  $\bSigma_c=\R_c\D_c\R_c'$
where we model each of the $(\R_c,\D_C)$ with the prior structure of
Section~\ref{sec:priors}, independently across components $c$.  This allows for
differing degrees and patterns of sparsity as we move across components of the mixture,
in the context of also accommodating realistic assessment of overlaid measurement errors.
We couple this with conditionally conjugate normal priors for the $(\bmu_c|\bSigma_c)$
independently across components, and a
similarly conditionally conjugate Dirichlet prior for the mixture weights  $\w$. The final component of prior
specification is a set of $q$ conditionally independent inverse gamma priors for the measurement error
variances $\psi_j,$ $j=1:q.$

The traditional MCMC analysis of multivariate normal mixtures~\cite{Lavine1992,Cron2011} is easily extended to apply here.
Several points require note. At each iterate conditional on currently imputed values of the
$\x_i,$ we resample new values of the
mixture component indicators $\gamma_i$ for each of the $i=1:n$ observations. Conditional on these
indicators, the imputed signal \lq\lq data''  vectors
$\x_i$ are organized into $C$ conditionally independent normal subgroups.
The inherent component labelling issue is automatically addressed each iteration
using the efficient component relabelling strategy of~\cite{Cron2011}.  The numbers of
observations in each group define the conditional multinomial sample needed to draw new values of
the mixture weights $\w$ from the implied conditional Dirichlet posterior.
We then resample new group means $\bmu_c$ from the implied set of $C$ conditional normal posteriors.
Subtracting the group means from the values of the $\x_i$ within each group, we are then in a context of
having $C$ replicates of the normal, sparse Givens model.  Hence we apply $C$ parallel RJ-MCMC steps
to draw new values of rotator index sets, angles and eigenvalues in each of the components.  The final, additional
component of the overall MCMC posterior simulation arises due to the additive measurement error
structure of the model.   Given the resampled parameters and component indicators, the
implied conditional posterior for each $\x_i$ is normal, so easily sampled; given the new value of $\x_i,$
we compute new synthetic residuals $ \bepsilon_i = \y_i -\x_i $  that lead to a set of $q$,  independent
conditional posteriors for the $\psi_i$ that are each of inverse gamma form.

Additional technical details of these
steps appear in the Appendix, and in supporting material, we provide code implementing this
MCMC algorithm.

\subsection{A Broader Study in Breast Cancer Genomics}
\index{breast cancer genomics}\index{gene expression data} 

We analyse a set of $n=295$ breast cancer gene expression sample that represent the full range of breast cancers;
the data set uses the same 15 genes  as in the example of Section~\ref{sec:explore_alg}, but now
reflecting  full population heterogeneity; variations in the expression levels of these
15 breast cancer related genes is much greater across this full set of 295 tumor samples.
The $q=20-$dimensional data set again includes 5
 \lq\lq Junk'' genes generated as Gaussian noise, to add dimension for the evaluation of the model
analysis.

As discussed in Section~\ref{sec:explore_alg}, breast cancer heterogeneity based on molecular markers related to ER and Her2 pathways is often regarded in terms of over-lapping cancer subtypes.   The genes selected for this study relate to these pathways, and the
variability across samples is certainly empirically consistent with at least  three
underlying components; see some scatter plots on a few genes in Figure~\ref{fig:scatter}.
We fit the sparse Givens,  finite mixture model $C=4.$    Priors for residual measurement error variances
are informed by a wealth of prior information from studies of gene expression data using
Affymetrix microarrays in breast cancer and other contexts~\cite{Seo2004,Seo2007,Carvalho2008,Lucas2006,Lucas2009,Lucas2009a,Lucas2010}.
Specifically, we adopt
$\psi_j \sim InvGa(3.1, 0.17)$ with implied 95\% prior credible intervals for
measurement error standard deviations of about $(0.15, 0.5).$
For the mixture weights $\w,$ we take a uniform Dirichlet prior. For component
locations, we take $\bmu_c|\bSigma_c \sim N(\0,
\tau\bSigma_c)$ where $\tau$ is large to induce a rather diffuse marginal  prior
on  mixture  locations;  the analysis summarized below has $\tau=1{,}000.$
The prior over the sparse Givens parameters for each $\bSigma_c$ takes  $\beta_0=0.99$,
$\beta_{\pi/2}=0.25$, and $\kappa=0$. This prior  expresses an expectation of a fair
degree of sparsity in each  $\R_c,$  coupled with a vague uniform prior on values of
non-zero angles. Finally for $\D_c$, we have  $\eta_1=1/1{,}000$ and $\eta_2=1/1{,}000$
representing an uninformative prior on the eigenvalues, up to the constraints imposed by
their ordering.

 For starting values, we first crudely
partition the data using $k-$means clustering, then use the exploratory
algorithm of Section~\ref{sec:explore_alg}  with a correlation
threshold of $0.5.$  We  run the MCMC for 200{,}000 iterations, discarding the first 100{,}000
for burn-in to ensure convergence, with a number of subjective assessments of this.
The analysis identifies 3 main components with posterior means of components
weights of $(0.564, 0.303, 0.125, 0.008)'.$ Figure~\ref{fig:scatter}  shows
some aspects of the posterior through scatter plots of data on a few selected genes.
From the MCMC we compute estimates of the sample component classification probabilities
$Pr(\gamma_i=c|\Y)$ for each $i=1:n,$ and allocate sample $i$ to the most probable component
for the purpose of this graphical display; the data points are plotted as symbols corresponding to
their most probable component.  The dominant identified component $c=1$ represents cases
with expression varying across high levels for genes linked to the ER pathway,  including TFF3, CA-12 and GATA3 shown in Figure~\ref{fig:scatter}, and with Her2 pathway genes varying at relatively low levels; these
represent the broad luminal subtype of breast cancers~\cite{sorlie2004,Carvalho2008,Lucas2009a}.
The second main component $c=2$ represents cases generally high in Her2 expression levels, with
other genes varying across the spectrum; this corresponds to  high-risk  Her2 breast cancers that
are generally targets for the Her2 receptor inhibiting drug herceptin. The third, smaller component
$c=3$ represent the so-called triple-negative/basal-like tumors, with generally low levels of
activity of both ER and Her2 related genes. The patterns in the figure, and in those of other genes in the
example, as well as the posterior estimates of relative sizes of these three main components, are
quite consistent with the known cancer biology and relative probabilities of these three broad, and
imprecisely defined clinical subtypes of tumors.

\begin{figure}[tp!]
\centering
\begin{tabular}{rl}
\hspace{-1.6cm}\includegraphics[width=0.5\textwidth]{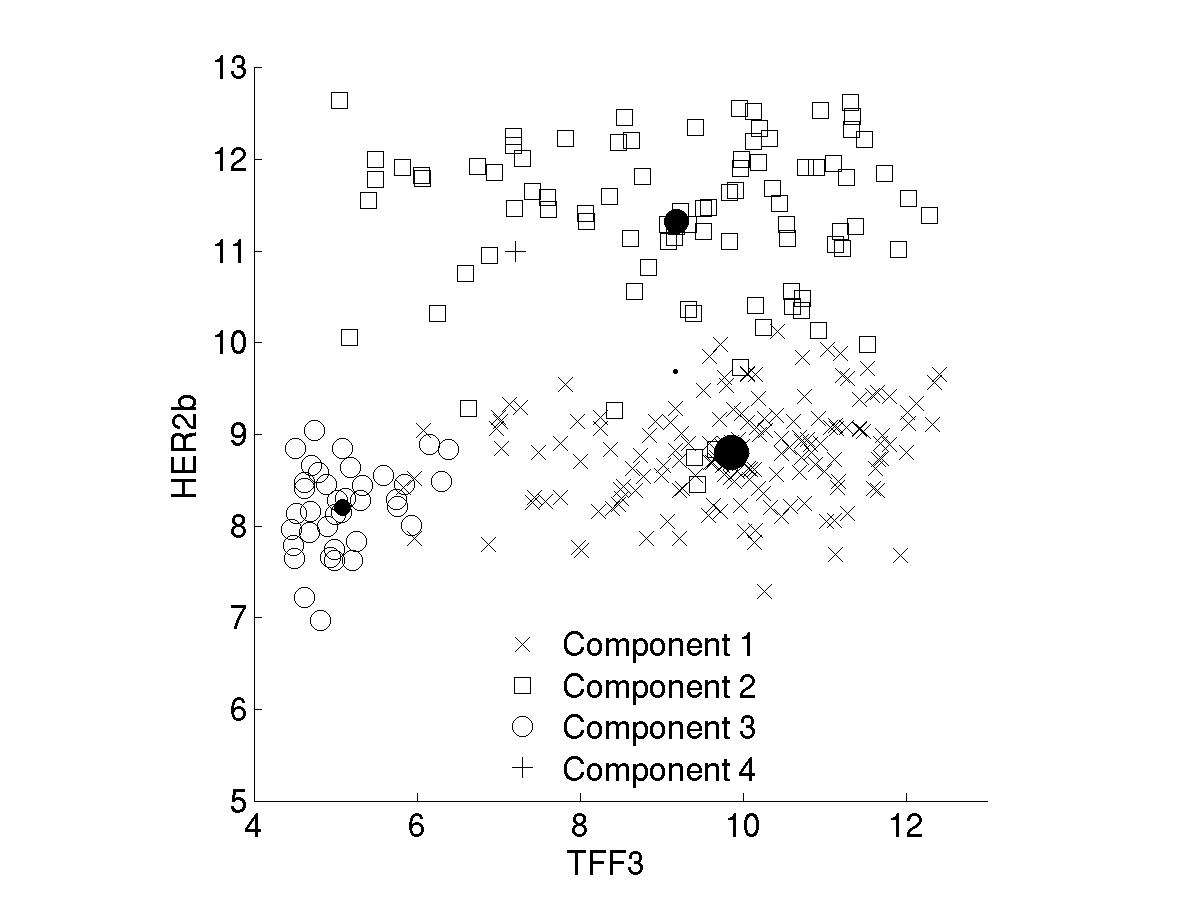} &
\hspace{-2.1cm}\includegraphics[width=0.5\textwidth]{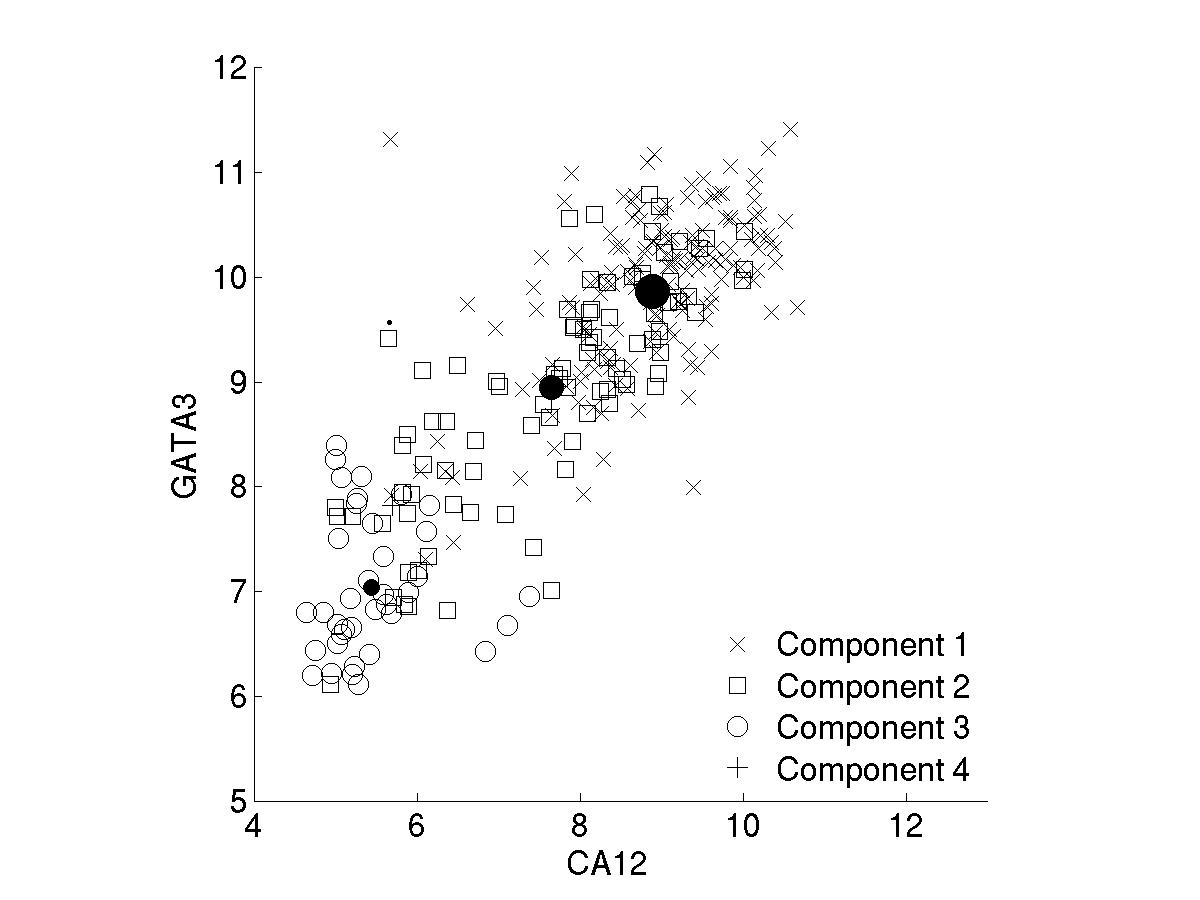}
\end{tabular}
\caption{\protect Aspects of estimated component classification summaries
	 in the analysis of the mixture model for the gene expression data.
	The filled  circles correspond to the estimated means of the mixture components
        and their sizes are proportional to the estimated component weights.
	\label{fig:scatter}}
\end{figure}

Table~\ref{tab:postsparse}
summarizes the posterior for the number of non-zero angles in each mixture
component and the sparsity of $\R_c$.  The
maximum number of rotators is 105 and the maximum sparsity of $\R$ is
210. The posterior favours a very sparse set of angles in
each mixture component and the eigenstructure in each component is
quite sparse as a result.

\begin{table}[bp!]
\centering
\begin{tabular}{lrrrrcrrr}
\multicolumn{1}{l}{}
 && \multicolumn{3}{c}{\% non-zero rotators} && \multicolumn{3}{c}{\% zeros in $\R$}\\
 &
 &\multicolumn{1}{c}{{2.5\%}}&\multicolumn{1}{c}{{50\%}} &\multicolumn{1}{c}{{97.5\%}}
 &
 &\multicolumn{1}{c}{{2.5\%}}&\multicolumn{1}{c}{{50\%}}&\multicolumn{1}{c}{{97.5\%}}\\
\cline{3-5} \cline{7-9} \\
{Component 1}&&9.5&10.5&12.1&\phantom{.}\qquad\qquad&66.8&71.8&73.7\\
{Component 2}&&8.4&10.0&11.0&&74.5&82.6&87.4\\
{Component 3}&&1.6&3.2&4.2&&95.3&92.1&97.6\\
\phantom{.}\\
\end{tabular}
\caption{Posterior medians and end-points of approximate posterior 95\% credible intervals for
the percentage of sparse elements of $\R_c$  and of the number of rotators in each of the
mixture components $c=1,2,3$.\label{tab:postsparse}}
\end{table}

Figures~\ref{fig:postKedges} and~\ref{fig:postEmean} give
graphical summaries generating insights into the inferred sparse structures
underlying the $\bSigma_c$ for each component $c=1:4.$ Figure~\ref{fig:postKedges}
shows heat maps of approximate posterior probabilities of non-zero values in the
precision matrices $\K_c=\bSigma_c^{-1}$ indicating the nature of sparsity and the
underlying graphical model structure.
Component $c=1$ has high probabilities on multiple edges linking pairs of ER related
genes, Her2 related genes and tying in the two BRCA genes.  Component $c=2$
more sharply identifies a Her2-related cluster and a distinct ER-related cluster,  with
somewhat weaker links to  the two related BRCA genes. The much sparser component $c=3$
highlights links only between Her2 related genes.

\begin{figure}[bp!]
\centering
\begin{tabular}{rl}
\hspace{-1.0cm}\includegraphics[width=0.5\textwidth]{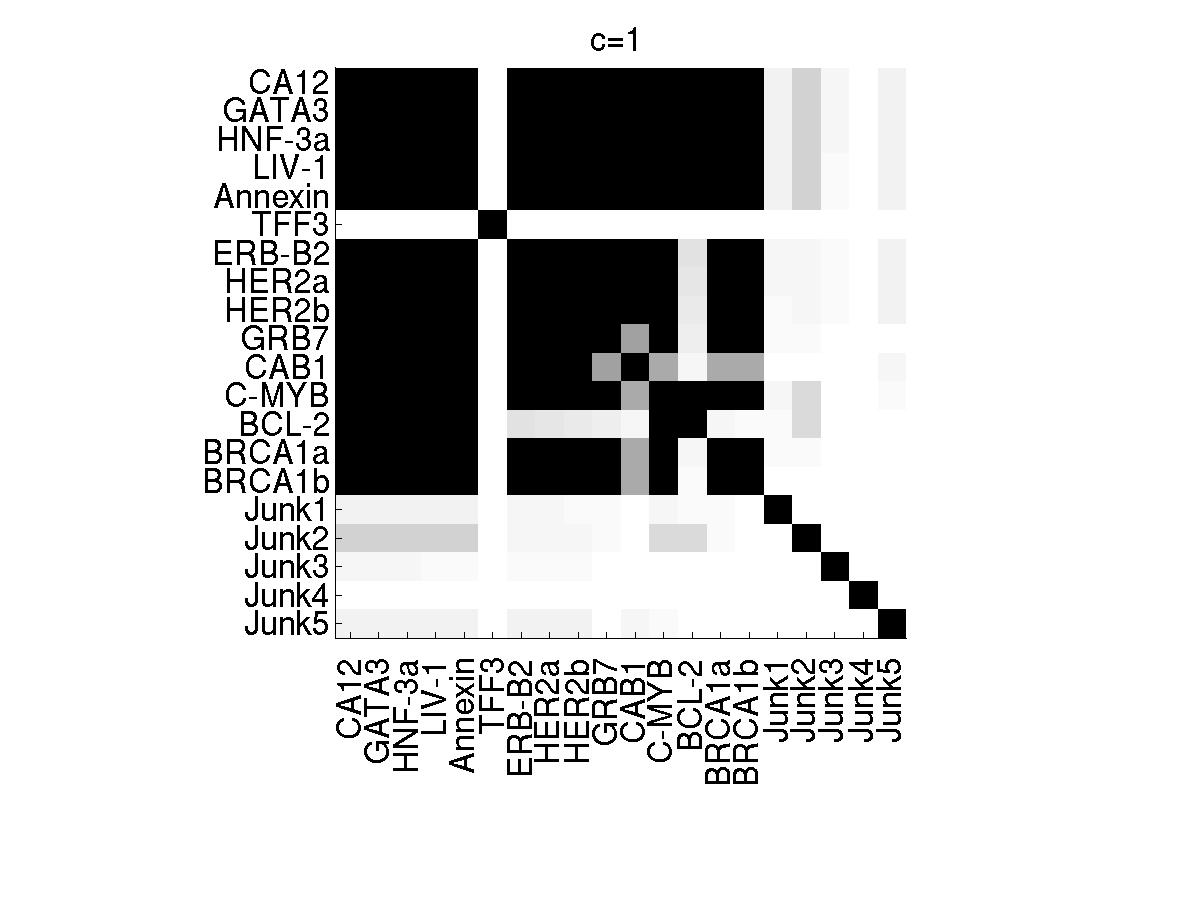} &
\hspace{-2.0cm}\includegraphics[width=0.5\textwidth]{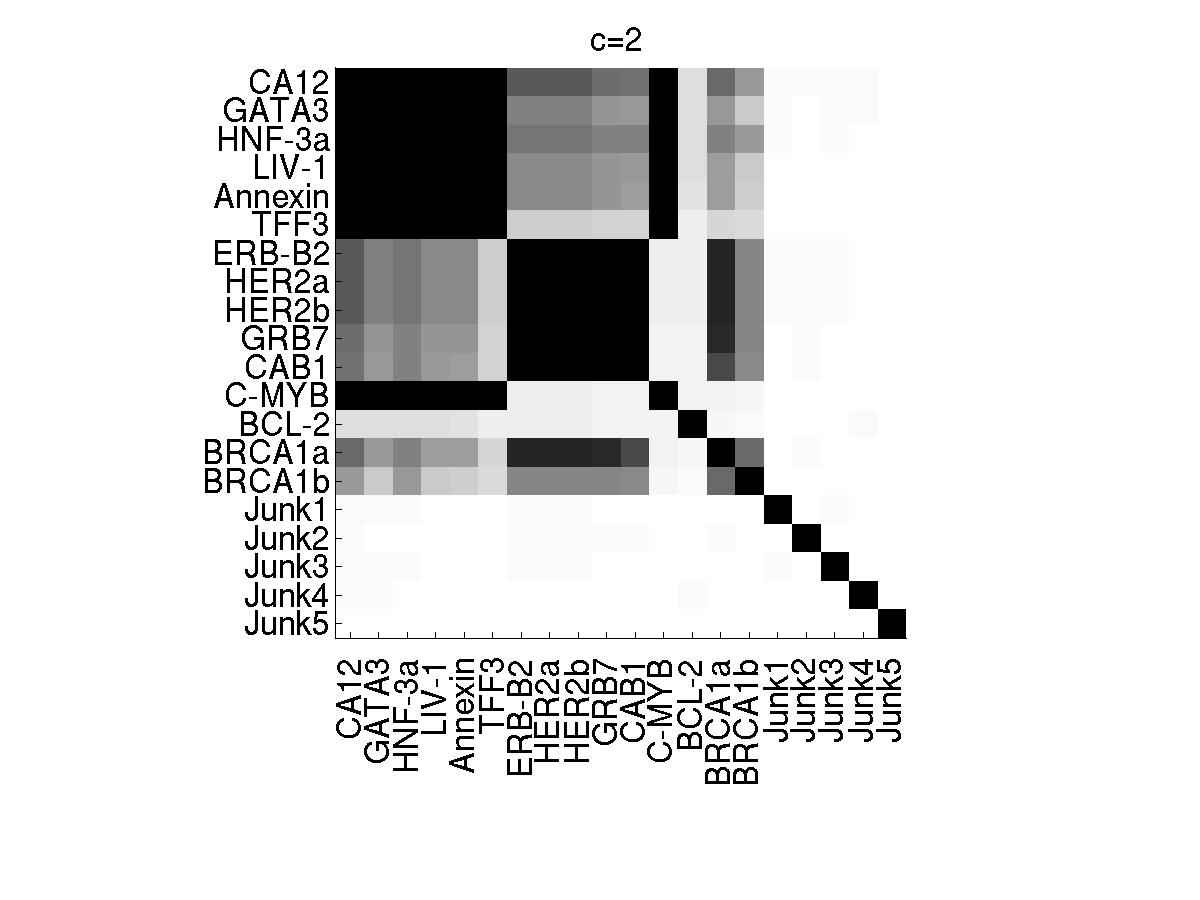} \\
\hspace{-1.0cm}\includegraphics[width=0.5\textwidth]{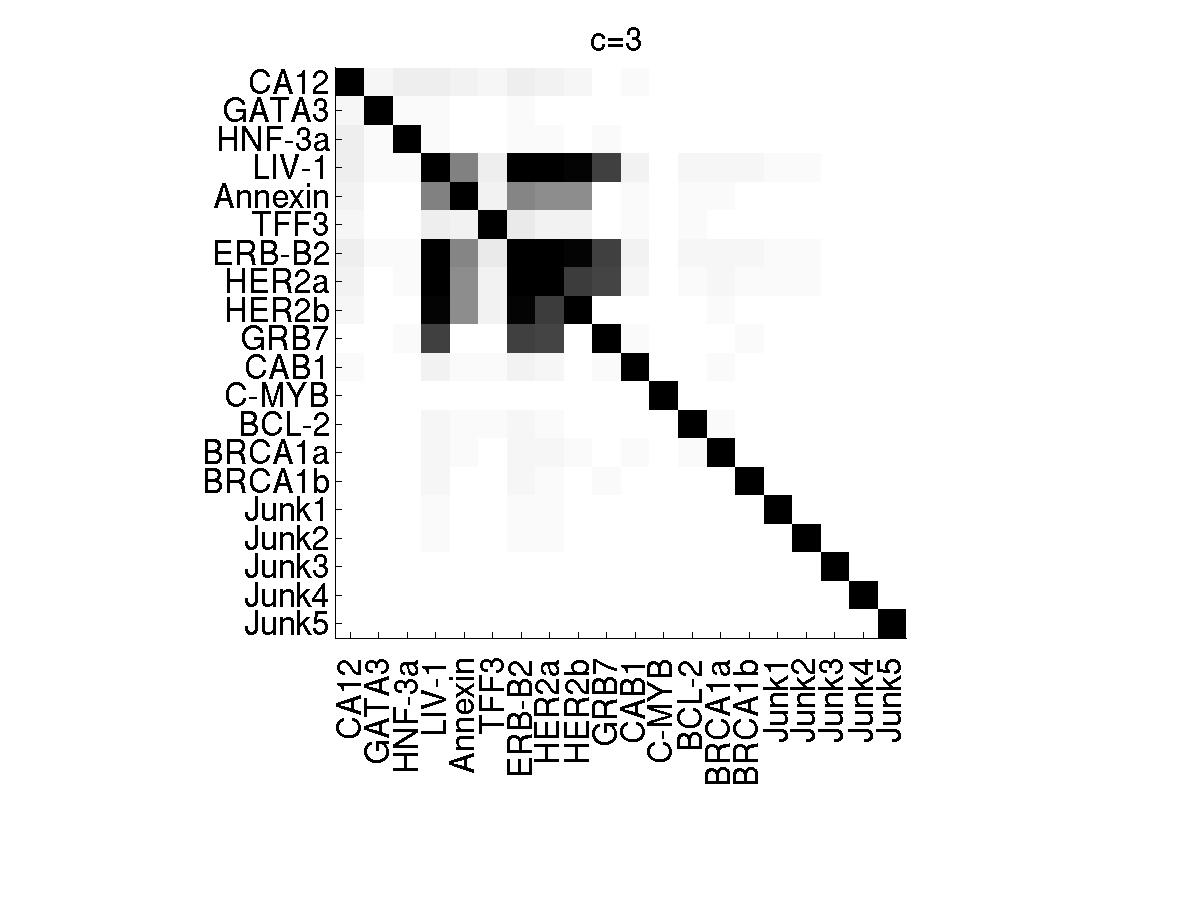} &
\hspace{-2.0cm}\includegraphics[width=0.5\textwidth]{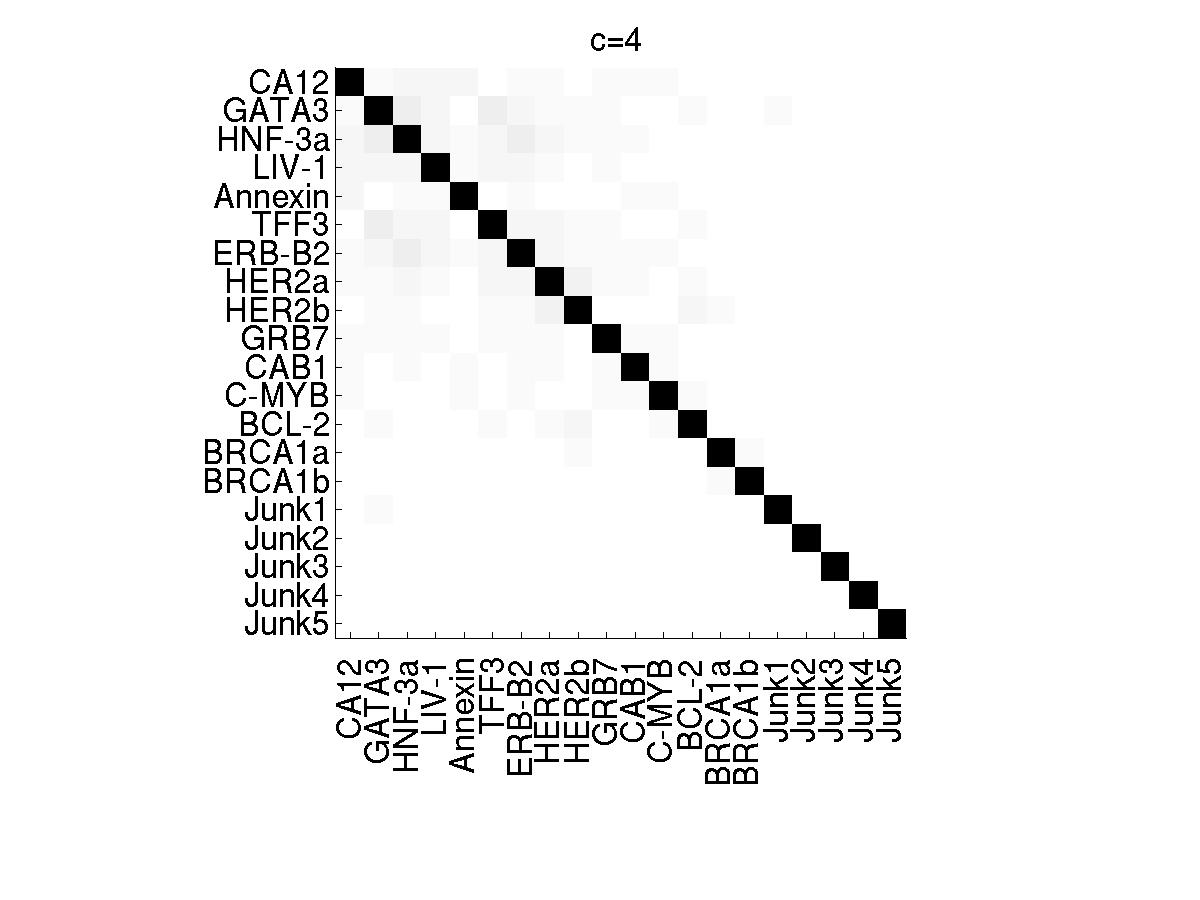}
\end{tabular}
\caption{Heat maps showing posterior probabilities of non-zero entries in $\K_c$ for each normal
mixture component $c=1:4$ in analysis of cancer gene expression data.  Shading runs from
white$=$0 to black$=$1 in each.
 \label{fig:postKedges}}
\end{figure}

\begin{figure}[bp!]
\centering
\phantom{.}\vskip0.75in
\begin{tabular}{rl}
\hspace{-0.2cm}\includegraphics[width=0.4\textwidth]{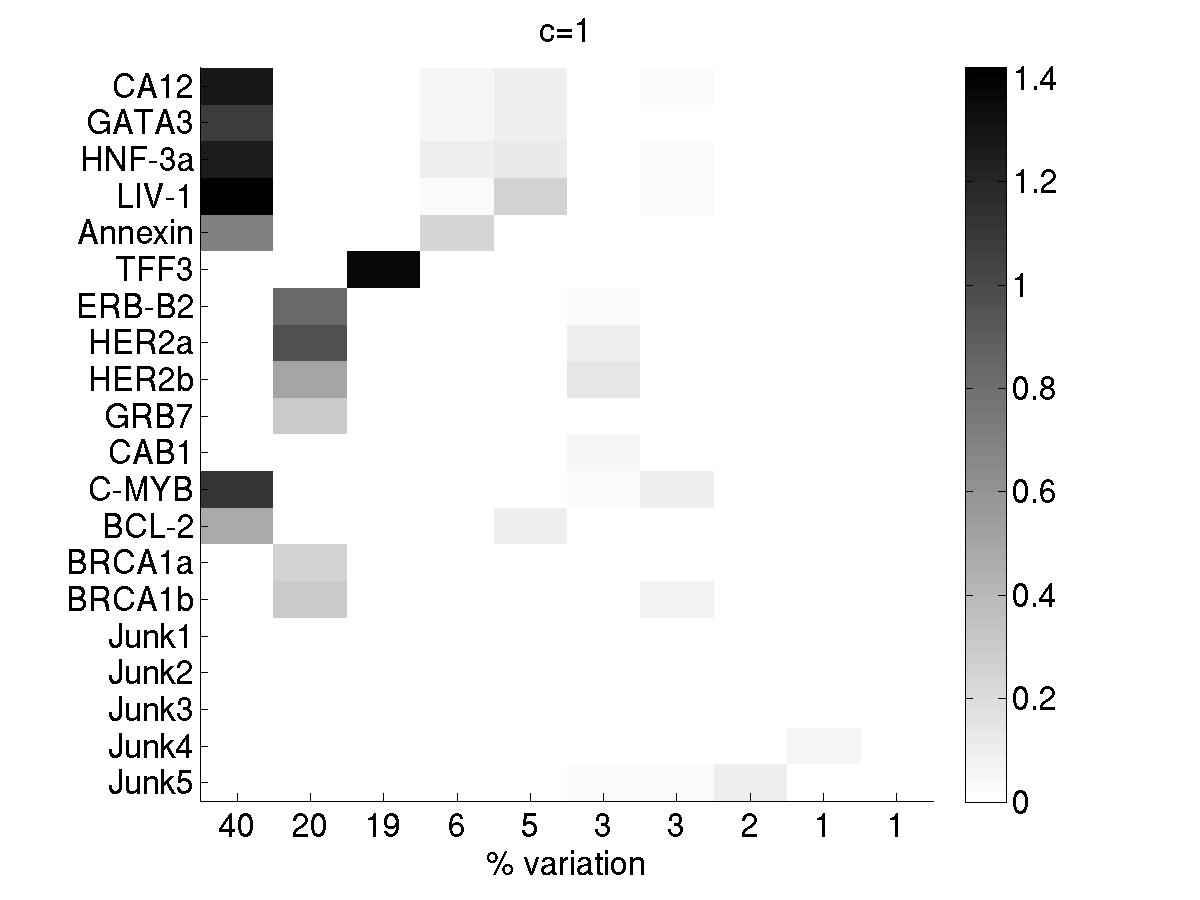} &
\hspace{-1.0cm}\includegraphics[width=0.4\textwidth]{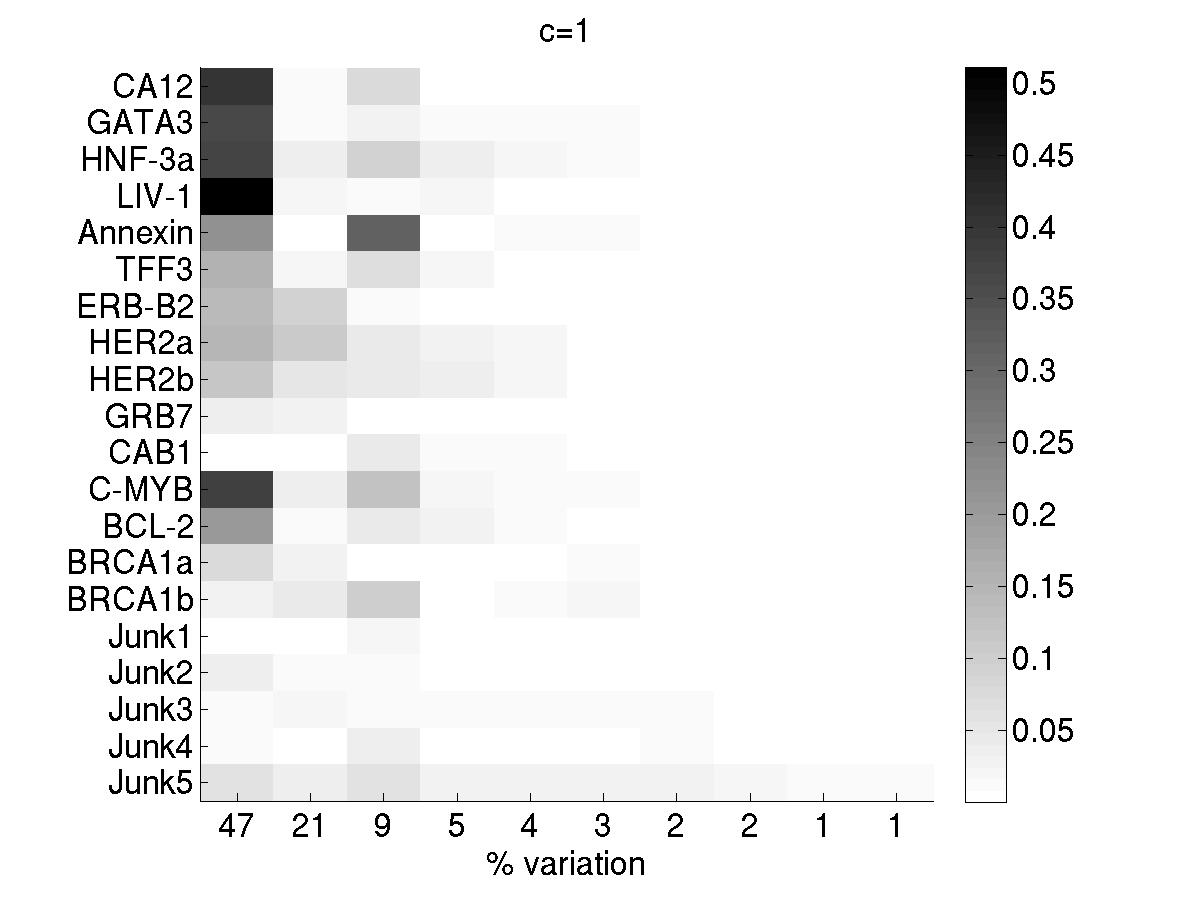} \\
\hspace{-0.2cm}\includegraphics[width=0.4\textwidth]{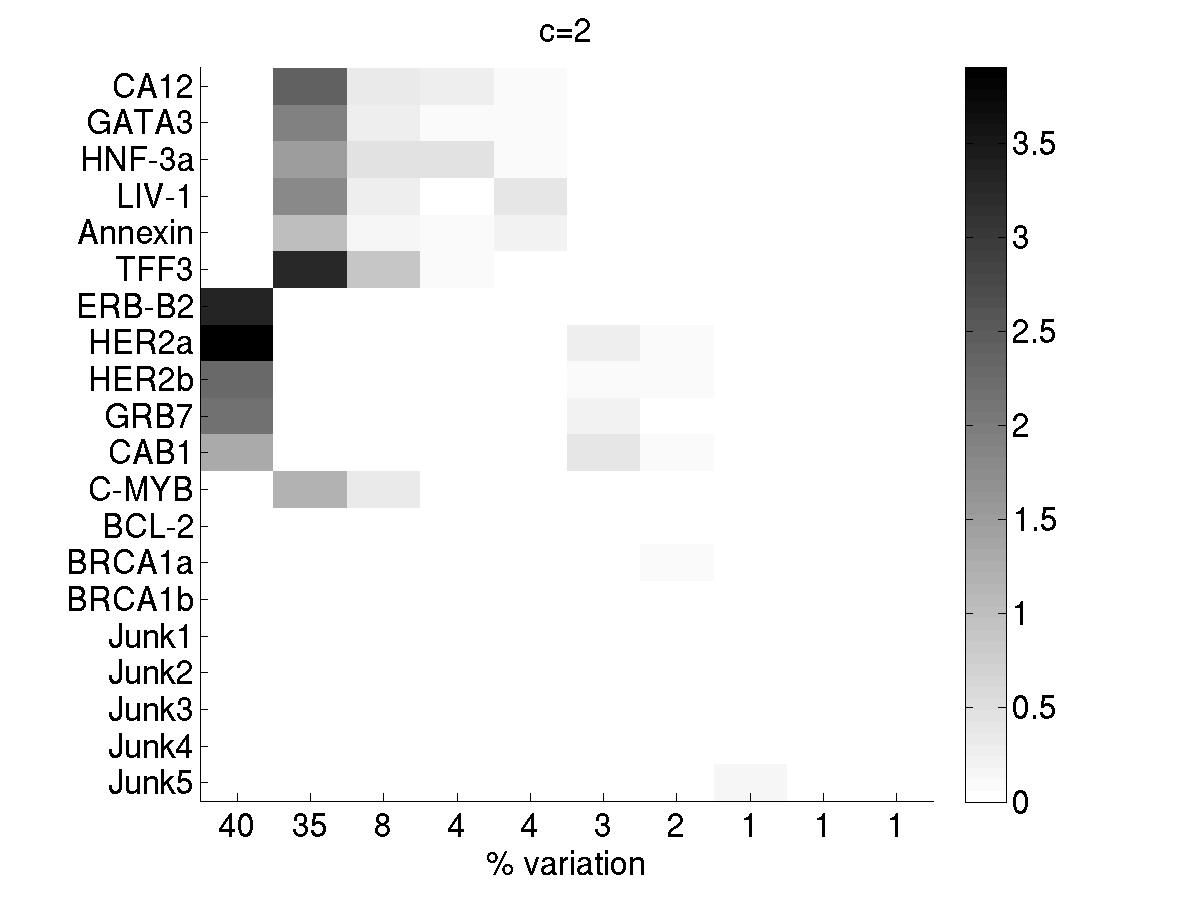} &
\hspace{-1.0cm}\includegraphics[width=0.4\textwidth]{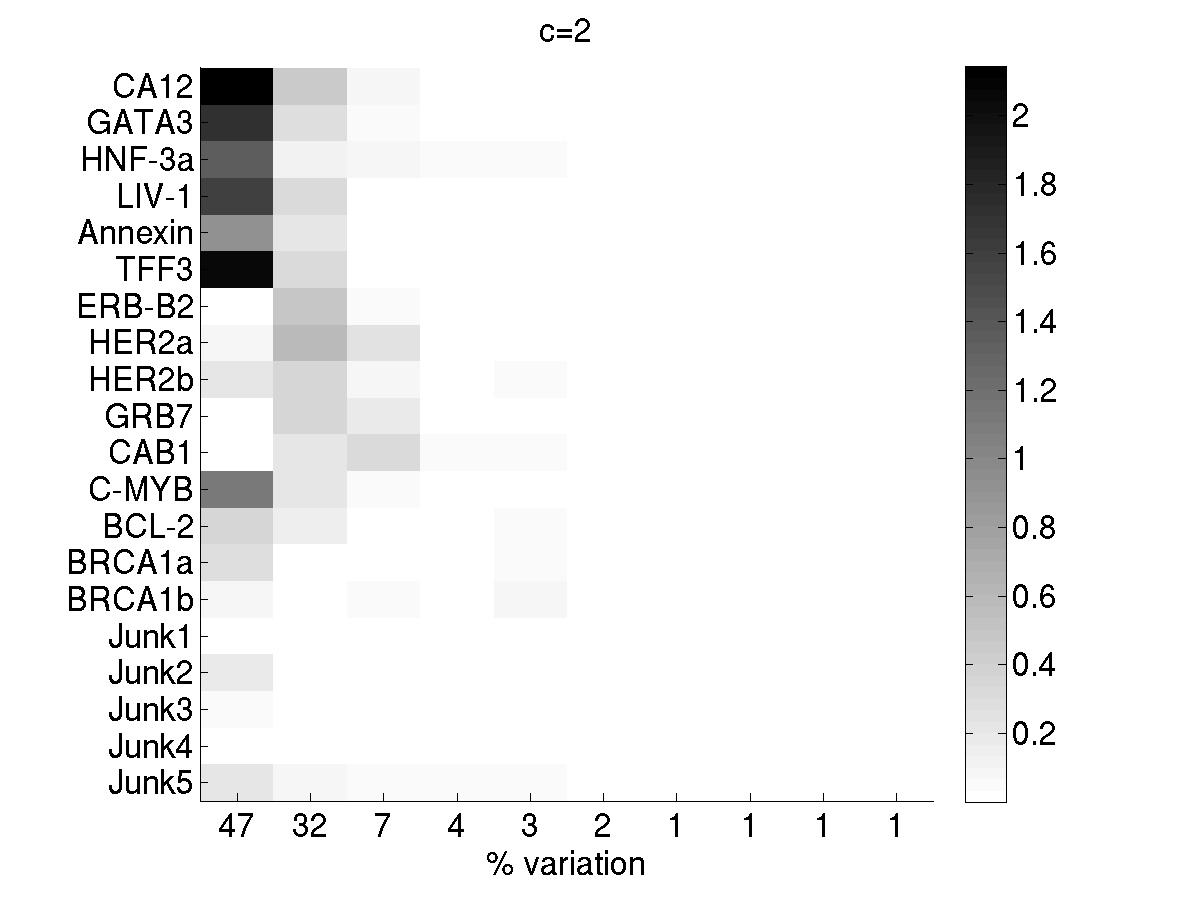}\\
\hspace{-0.2cm}\includegraphics[width=0.4\textwidth]{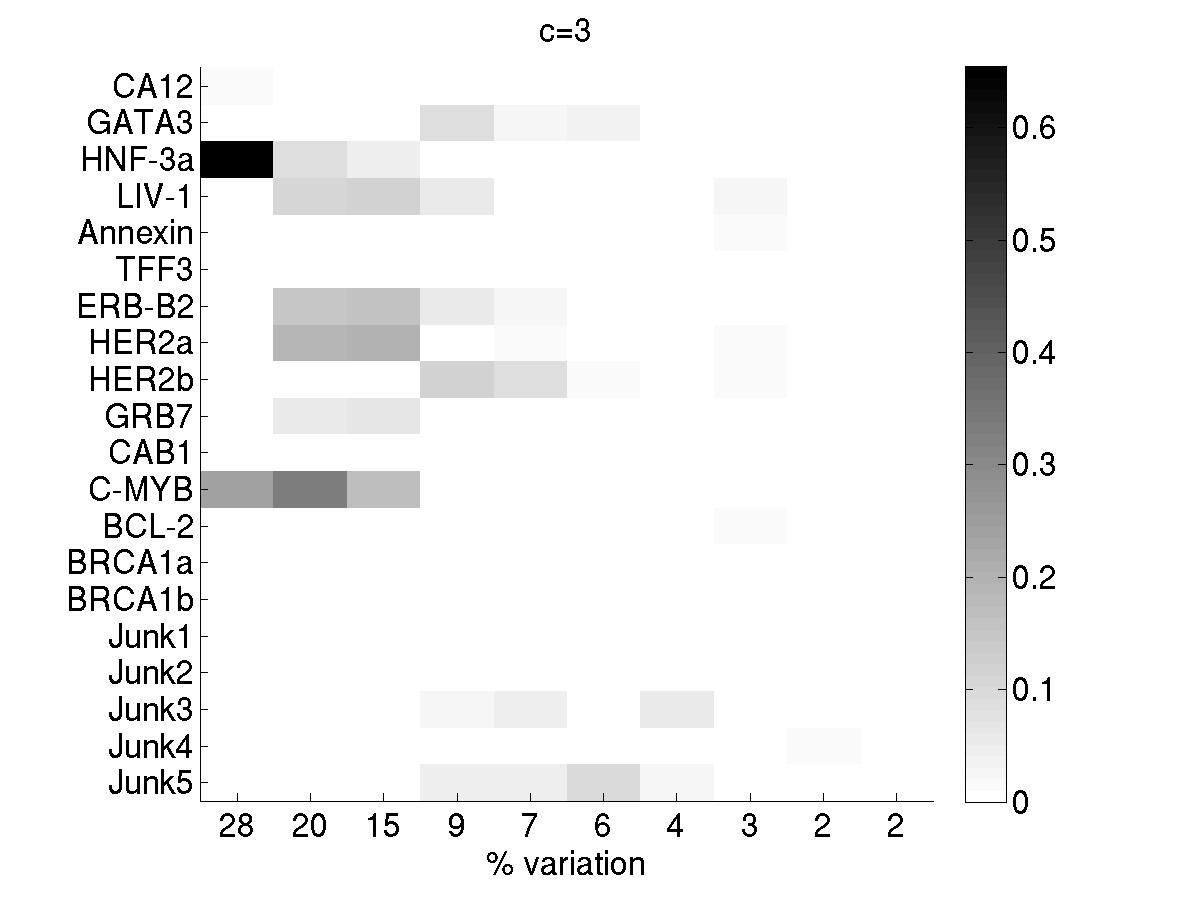} &
\hspace{-1.0cm}\includegraphics[width=0.4\textwidth]{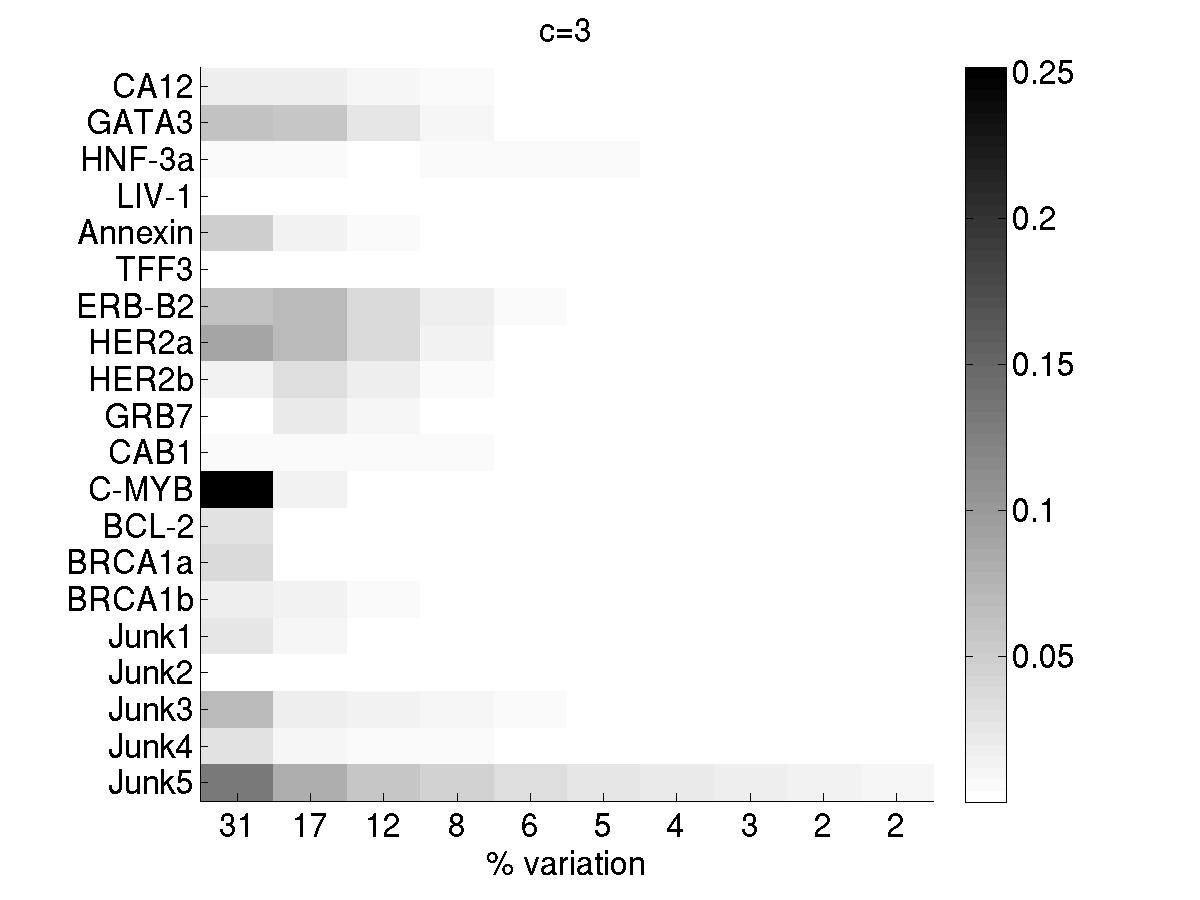}
\end{tabular}
\caption{Heat maps showing posterior means of the first 10 columns of $ \R_c\D_c^{1/2}$ for the three
main mixture components in analysis of cancer
 gene expression data. Left column:  Sparsity model,  and Right column: full non-sparse model.
The percent variation explained by each eigenvector is indicated
on the x-axes, based on posterior means of the $d_j$ in each case.
 \label{fig:postEmean}}
\end{figure}

Finally, we see that, appropriately, the low probability component $c=4$
has really no structure at all, consistent with prior for a basically empty component.
These conditional dependencies, and independencies, are better understood in terms of
the estimated factor structure underlying  eigenmatrices and eigenvalues of the
sparse Givens models in each component; these are shown in the left column of Figure~\ref{fig:postEmean}
for the three main components.
For the \lq\lq high ER" tumors in component $c=1,$  we see one dominant and two subsidiary eigenvectors, indicating
three \lq\lq ER-related factors"  based on non-zero loadings of the ER-related
genes; these presumably reflect several dimensions of the underlying patterns of variability in these
genes as a result of the complexity of the ER network.  The second dominant eigenvector relates to the
Her2 cluster.    For the \lq\lq high Her2" tumors in $c=2,$ we see the dominant factor is indeed linked to the
Her2 gene cluster, while the fact that ER related genes vary across the scale in these tumors
leads to a natural set of three or four ER-related factors.   For the triple negative/basal-like tumors in
component $c=3,$ we see residual biological pathway activity highlighted involving HNF-3$\alpha$ and C-MYB genes, as well as important factors in both ER and Her2 pathways; although these two pathways are less active in
tumors in this group, there is still meaningful variation among subsets of some of these genes.

For comparison, the right column in Figure~\ref{fig:postEmean} shows the corresponding eigenstructure
extracted from an analysis using traditional inverse Wishart priors on the $\bSigma_c$, i.e., in the standard
analysis with no sparsity.   It is very clear how the sharp factor-based groupings in the sparse Givens mixture
model \lq\lq cleans-up" the much noisier  standard results. In addition to cleaner and focused inference
on dependency structures, we also found that the standard analysis-- by comparison with the sparse model--
generates over-diffuse estimates of the spread of mixture components and so less sharp classification of
samples, as a result.

\section{Additional Comments
\label{sec:end}}

In terms of modelling variations and extensions, one interesting question relates to the
interpretation of the sparse Givens model as factor analysis. Our examples
have stressed this interpretation from an applied viewpoint. Theoretically, the Givens
 model is a full-rank, orthogonal factor analysis model. We can imagine extensions to include
reduced rank approximations that would be based on the use of priors giving
positive probability to zero values among the $d_i,$  relating more directly to
alternative factor modelling frameworks~\cite{Yoshida2010}.

We have experience in running the MCMC analysis for higher-dimensional
variance matrices, including extensions of the gene expression examples with $q=300$
genes. The overall performance of the MCMC is scalable, in terms of acceptance rates,
while of course the running time and implementation overheads increase.  In particular,
as the number of rotators grows, a number of computational challenges arise. First,
the numerical optimization to define Metropolis proposals $g_c(\omega_{i,j})$ becomes
increasingly time consuming, so that one immediate area of research will be to explore
more computationally efficient proposal strategies for the MCMC.
Second,  based on our  positive experience with the exploratory analysis to define
ad-hoc starting values for increasingly high-dimensional problems, one direction
 for improving the MCMC would be to consider alternatives to the
birth/death strategy based on more aggressive local search in neighborhoods of
\lq\lq good" sparsity configurations. Some of the concepts and
computational strategies underlying shotgun stochastic search in
regression and graphical models~\cite{Jones2005a,Hans2006,Hans2007a,Hans2007b}
may be of real benefit here. The potential for distributed
computation, including using GPU hardware~\cite{GPUisba2010,Suchard2010,GruberWest2015BA} is also of interest.

\subsubsection*{Supplementary Material}
As noted in the text, code implementing the analyses reported  here is (freely) available to
interested readers at the authors' web site.

\newpage

\section*{Appendix:  MCMC in Mixtures of Sparse Givens Models \label{sec:appendix}}
\addcontentsline{toc}{section}{Appendix}

Additional technical details of the MCMC algorithm in Section~\ref{sec:mixtures} are given here.

\medskip\noindent{(a) \em Starting values:}
We use $k-$means clustering to define initial, crude classification of the data into $C$ groups,
giving starting values for component indicators $\bgamma.$
Group means and proportions define starting values
$\bmu_c$ and $\w_c.$  Initial values for the Givens structures within each group are then
created using  the exploratory algorithm of
Section~\ref{sec:explore}. Beginning with the sample variance matrix of each group $c,$
this algorithm produces a sparse Givens structure with
starting values for the rotator pairs, angles and eigenvalues,
and hence $\R_c$, $\D_c$ and $\bSigma_c.$   The measurement error
variances $\psi_j$ in $\bPsi$ are initialized at draws from the prior.

\medskip\noindent{(b) \em Rotator structure and angle updates:}
For each cluster $c=1:C$ defined at the current iterate of the MCMC,
we update the rotators selected and corresponding angles using the RJ-MCMC analysis of
Section~\ref{sec:RJMCMC}.

\medskip\noindent{(c) \em Latent data $\X$:}
Each $\x_i$ is resampled from the complete conditional normal posterior whose
mean vector $\m_i$ and variance matrix $\M_i$ are given by
$$ \m_i = \M_i(\bPsi^{-1}\y_i + \bSigma_{\gamma_i}^{-1}\bmu_{\gamma_i})
\quad\textrm{and}\quad
\M_i^{-1}=\bPsi^{-1} + \bSigma_{\gamma_i}^{-1}.$$
Note that the $\bSigma_c^{-1}$ can
be calculated trivially even in high dimensional cases simply by
inverting the eigenvalues.

\medskip\noindent{(d) \em Measurement error variances $\psi_j$:}
Each of the $q$ elements of the diagonal matrix $\bPsi$ is resampled from a complete conditional
given by
\[ \psi_j^{-1} \sim Ga(\phi_a+n/2,\phi_b + \sum_{i=1}^n \epsilon_{ji}^2/2)  \]
for $j=1:q,$  where $\epsilon_{ji}$ is the $j-$the element of $\y_i-\x_i.$

\medskip\noindent{(e) \em Component indicators $\bgamma$:}
The set of $n$ component classification indicators $\bgamma=(\gamma_1,\ldots,\gamma_n)'$
are drawn from conditionally independent multinomials,
each with sample size 1 and probabilities over the $C$ cells defined by
\[ Pr(\gamma_i=c|-)\propto
w_cN(\x_i|\bmu_c, \bSigma_c), \qquad c=1:C,\]
where $N(\x|\cdot,\cdot)$ denotes the multivariate normal pdf.

\medskip\noindent{(f) \em Component weights $\w$:}
Resampled weights come from the complete conditional Dirichlet posterior with
parameter $(\alpha_1,\ldots,\alpha_C)'$ and
$\alpha_c =  1/C + n_c $ where $$ n_c = \sum_{i=1}^n I(\gamma_i=c) , \quad c=1:C. $$

\medskip\noindent{(g) \em Component means $\bmu_c$:}
Denote by $ \bar{\x}_c$ the sample mean in group $c$ given a current set of component indicators.  Then the
component means are sampled in parallel from the $C$ conditional normal posteriors with means
$(n_c+1/\tau)^{-1}n_c\bar{\x}_c$ and  variance matrices
$(n_c+1/\tau)^{-1}\bSigma_c,$ $c=1:C.$

\medskip\noindent{(h) \em Eigenvectors $\D_c:$}
Finally, the complete conditional distributions of the diagonal elements of $\D_c = \A_c^{-1}$
are independent inverse gammas constrained by the ordering; see \eqref{eq:postaj}, applied
to each of the $C$ groups in parallel. These are sampled in sequence using the inverse cdf method.


\begin{thebibliography}{10}

\bibitem{Anderson1987}
T.~W. Anderson, I.~Olkin, and L.~G. Underhill.
\newblock Generation of random orthogonal matrices.
\newblock {\em SIAM Journal on Scientific and Statistical Computing},
  8:625--629, 1987.

\bibitem{Carvalho2008}
C.~M. Carvalho, J.~E. Lucas, Q.~Wang, J.~Chang, J.~R. Nevins, and M.~West.
\newblock High-dimensional sparse factor modelling - {A}pplications in gene
  expression genomics.
\newblock {\em Journal of the American Statistical Association},
  103:1438--1456, 2008.

\bibitem{carvalho:west:07}
C.~M. Carvalho and M.~West.
\newblock Dynamic matrix-variate graphical models.
\newblock {\em Bayesian Analysis}, 2:69--98, 2007.

\bibitem{Chan2008}
C.~Chan, F.~Feng, J.~Ottinger, D.~Foster, M.~West, and T.~B. Kepler.
\newblock Statistical mixture modelling for cell subtype identification in flow
  cytometry.
\newblock {\em Cytometry, A}, 73:693--701, 2008.

\bibitem{Cron2011}
Andrew~J. Cron and Mike West.
\newblock Efficient classification-based relabeling in mixture models.
\newblock {\em The American Statistician}, 65:16--20, 2011.

\bibitem{Daniels2009}
M.J. Daniels and M.~Pourahmadi.
\newblock Modeling covariance matrices via partial autocorrelations.
\newblock {\em Journal of Multivariate Analysis}, 100:2352--2363, 2009.

\bibitem{Dobra2004}
A.~Dobra, B.~Jones, C.~Hans, J.~R. Nevins, and M.~West.
\newblock Sparse graphical models for exploring gene expression data.
\newblock {\em Journal of Multivariate Analysis}, 90:196--212, 2004.

\bibitem{Dobra2011}
A.~Dobra, A.~Lenkoski, and A.~Rodriguez.
\newblock Bayesian inference for general {G}aussian graphical models with
  application to multivariate lattice data.
\newblock {\em Journal of the American Statistical Association},
  106:1418--1433, 2012.

\bibitem{Escobar1995}
M.~D. Escobar and M.~West.
\newblock Bayesian density estimation and inference using mixtures.
\newblock {\em Journal of the American Statistical Association}, 90:577--588,
  1995.

\bibitem{Fisher1993}
N.~I. Fisher.
\newblock {\em Statistical Analysis of Circular Data}.
\newblock Cambridge University Press, 1993.

\bibitem{Fulkerson1965}
D.~R. Fulkerson and O.~A. Gross.
\newblock Incidence matrices and interval graphs.
\newblock {\em Pacific Journal of Mathematics}, 15:835--855, 1965.

\bibitem{Green1995}
Peter~J. Green.
\newblock Reversible jump {M}arkov chain {M}onte {C}arlo computation and
  {B}ayesian model determination.
\newblock {\em Biometrika}, 82:711--732, 1995.

\bibitem{GruberWest2015BA}
L.~F. Gruber and M.~West.
\newblock {GPU}-accelerated {B}ayesian learning in simultaneous graphical
  dynamic linear models.
\newblock {\em Bayesian Analysis}, {\rm - Advance Publication, 2 March
  2015}:http://projecteuclid.org/euclid.ba/1425304898, 2015.

\bibitem{Hans2007a}
C.~Hans, A.~Dobra, and M.~West.
\newblock Shotgun stochastic search in regression with many predictors.
\newblock {\em Journal of the American Statistical Society}, 102:507--516,
  2007.

\bibitem{Hans2007b}
C.~Hans, Q.~Wang, A.~Dobra, and M.~West.
\newblock {SSS}: High-dimensional {B}ayesian regression model search.
\newblock {\em Bulletin of the International Society for Bayesian Analysis},
  24:8--9, 2007.

\bibitem{Hans2006}
C.~Hans and M.~West.
\newblock High-dimensional regression in cancer genomics.
\newblock {\em Bulletin of the International Society for Bayesian Analysis},
  13:2--3, 2006.

\bibitem{Hoff2009}
Peter~D. Hoff.
\newblock Simulation of the matrix {B}ingham-von {M}ises-{F}isher distribution,
  with applications to multivariate and relational data.
\newblock {\em Journal of Computational and Graphical Statistics}, 18:438--456,
  2009.

\bibitem{Huang2002}
E.~S. Huang, M.~West, and J.~R. Nevins.
\newblock Gene expression profiles and predicting clinical characteristics of
  breast cancer.
\newblock {\em Hormone Research}, 58:55--73, 2002.

\bibitem{Jones2005a}
B.~Jones, A.~Dobra, C.~M. Carvalho, C.~Hans, C.~Carter, and M.~West.
\newblock Experiments in stochastic computation for high-dimensional graphical
  models.
\newblock {\em Statistical Science}, 20:388--400, 2005.

\bibitem{Jones2005}
B.~Jones and M.~West.
\newblock Covariance decomposition in undirected {G}aussian graphical models.
\newblock {\em Biometrika}, 92:779--786, 2005.

\bibitem{lauritzen96}
S.~L. Lauritzen.
\newblock {\em Graphical Models}.
\newblock Clarendon Press, Oxford, 1996.

\bibitem{Lavine1992}
M.~Lavine and M.~West.
\newblock A {B}ayesian method for classification and discrimination.
\newblock {\em Canadian Journal of Statistics}, 20:451--461, 1992.

\bibitem{Lopes2010c}
H.~F. Lopes, R.~E. McCulloch, and R.~Tsay.
\newblock Cholesky stochastic volatility.
\newblock Technical report, University of Chicago, Booth Business School, 2010.

\bibitem{Lucas2009}
J.~E. Lucas, C.~M. Carvalho, J.~T.~A. Chi, and M.~West.
\newblock Cross-study projections of genomic biomarkers: {A}n evaluation in
  cancer genomics.
\newblock {\em PLoS One}, 4:e4523, 2009.

\bibitem{Lucas2010}
J.~E. Lucas, C.~M. Carvalho, D.~Merl, and M.~West.
\newblock In-vitro to {I}n-vivo factor profiling in expression genomics.
\newblock In D.~Dey, S.~Ghosh, and B.~Mallick, editors, {\em Bayesian Modelling
  in Bioinformatics}, pages p293--316. Taylor-Francis, 2010.

\bibitem{Lucas2006}
J.~E. Lucas, C.~M. Carvalho, Q.~Wang, A.~H. Bild, J.~R. Nevins, and M.~West.
\newblock Sparse statistical modelling in gene expression genomics.
\newblock In K.A. Do, P.~Mueller, and M.~Vannucci, editors, {\em Bayesian
  Inference for Gene Expression and Proteomics}, pages 155--176. Cambridge
  University Press, 2006.

\bibitem{Lucas2009a}
J.~E. Lucas, C.~M. Carvalho, and M.~West.
\newblock A {B}ayesian analysis strategy for cross-study translation of gene
  expression biomarkers.
\newblock {\em Statistical Applications in Genetics and Molecular Biology},
  8:Article 11, 2009.

\bibitem{McLachlan2003}
G.J. McLachlan, D.~Peel, and R.W. Bean.
\newblock Modelling high-dimensional data by mixtures of factor analyzers.
\newblock {\em Computational Statistics and Data Analysis}, 41:379--388, 2003.

\bibitem{Nakajima2013}
J.~Nakajima and M.~West.
\newblock Bayesian analysis of latent threshold dynamic models.
\newblock {\em Journal of Business \& Economic Statistics}, 31:151--164, 2013.

\bibitem{Rodriguez2011}
A.~Rodriguez, A.~Lenkoski, and A.~Dobra.
\newblock Sparse covariance estimation in heterogeneous samples.
\newblock {\em Electronic Journal of Statistics}, 5:981--1014, 2011.

\bibitem{Seo2007}
D.~M. Seo, P.~J. Goldschmidt-Clermont, and M.~West.
\newblock Of mice and men: {S}parse statistical modelling in cardiovascular
  genomics.
\newblock {\em Annals of Applied Statistics}, 1:152--178, 2007.

\bibitem{Seo2004}
D.~M. Seo, T.~Wang, H.~K. Dressman, E.~E. Herderick, E.~S. Iversen, C.~Dong,
  K.~Vata, C.~A. Milano, F.~Rigat, J.~Pittman, J.~R. Nevins, M.~West, and P.~J.
  Goldschmidt-Clermont.
\newblock Gene expression phenotypes of atherosclerosis.
\newblock {\em Arteriosclerosis, Thrombosis and Vascular Biology},
  24:1922--1927, 2004.

\bibitem{sorlie2004}
T.~S{\o}rlie.
\newblock Molecular portraits of breast cancer: {T}umour subtypes as distinct
  disease entities.
\newblock {\em European Journal of Cancer}, 40:2667--75, 2004.

\bibitem{GPUisba2010}
M.~A. Suchard, C.~Holmes, and M.~West.
\newblock Some of the \emph{{W}hat?}, \emph{{W}hy?}, \emph{{H}ow?},
  \emph{{W}ho?} and \emph{{W}here?} of graphics processing unit computing for
  {B}ayesian analysis.
\newblock {\em Bulletin of the International Society for Bayesian Analysis},
  17:12--16, 2010.

\bibitem{Suchard2010}
Marc~A. Suchard, Quanli Wang, Cliburn Chan, Jacob Frelinger, Andrew Cron, and
  Mike West.
\newblock Understanding {GPU} programming for statistical computation:
  {S}tudies in massively parallel massive mixtures.
\newblock {\em Journal of Computational and Graphical Statistics}, 19:419--438,
  2010.

\bibitem{West2003}
M.~West.
\newblock Bayesian factor regression models in the ``large p, small n"
  paradigm.
\newblock In J.~M. Bernardo, M.~J. Bayarri, J.~O. Berger, A.~P. David,
  D.~Heckerman, A.~F.~M. Smith, and M.~West, editors, {\em Bayesian Statistics
  7}, pages 723--732. Oxford University Press, 2003.

\bibitem{West2001}
M.~West, C.~Blanchette, H.~K. Dressman, E.~S. Huang, S.~Ishida, H.~Zuzan
  R.~Spang, J.~R. Marks, and J.~R. Nevins.
\newblock Predicting the clinical status of human breast cancer utilizing gene
  expression profiles.
\newblock {\em Proceedings of the National Academy of Sciences},
  98:11462--11467, 2001.

\bibitem{Yang1994}
R.~Yang and J.~O. Berger.
\newblock Estimation of a covariance matrix using the reference prior.
\newblock {\em Annals of Statistics}, 22:1195--1211, 1994.

\bibitem{Yoshida2010}
R.~Yoshida and M.~West.
\newblock Bayesian learning in sparse graphical factor models via annealed
  entropy.
\newblock {\em Journal of Machine Learning Research}, 11:1771--1798, 2010.

\end{thebibliography}
\end{document}